\def\tsc#1{\csdef{#1}{\textsc{\lowercase{#1}}\xspace}}
\newtheorem{definition}{Definition}
\newtheorem{proposition}{Proposition}
\newtheorem{theorem}{Theorem}
\begin{document}
\let\WriteBookmarks\relax
\def\floatpagepagefraction{1}
\def\textpagefraction{.001}
\let\printorcid\relax  %可去掉页面下方的ORCID(s)
\shorttitle{MCFE-SI-NAS in Federated Learning}
\shortauthors{Ruyuan Zhang et~al.}

\title [mode = title]{Multi-client Functional Encryption for Set Intersection with Non-monotonic Access Structures in Federated Learning}                      
%\tnotemark[1,2]

%\tnotetext[1]{This document is the results of the research
%   project funded by the National Science Foundation.}

%\tnotetext[2]{The second title footnote which is a longer text matter
%   to fill through the whole text width and overflow into
%   another line in the footnotes area of the first page.}

%%%%%作者
\author[1]{Ruyuan Zhang}[type=editor]
%\fnmark[1]
\ead{ruyuanzhang@seu.edu.cn}

\affiliation[1]{organization={School of Cyber Science and Engineering, Southeast University},
                city={Nanjing},
                postcode={210096}, 
                country={China}}

\author[1]{Jinguang Han}[style=Chinese]
\cormark[1]
\ead{jghan@seu.edu.cn}
\cortext[cor1]{Corresponding author}

%%%%%摘要
\begin{abstract}
	Federated learning (FL) based on cloud servers is a distributed machine learning framework that involves an aggregator and multiple clients, which allows multiple clients to collaborate in training a shared model without exchanging data.
	Considering the confidentiality of training data, several schemes employing functional encryption (FE) have been presented.
	However, existing schemes cannot express complex access control policies. 
	In this paper, to realize more flexible and fine-grained access control, we propose a multi-client functional encryption scheme for set intersection with non-monotonic access structures (MCFE-SI-NAS), where multiple clients co-exist and encrypt independently without interaction.
	All ciphertexts are associated with an label, which can resist "mix-and-match" attacks.
	Aggregator can aggregate ciphertexts, but cannot know anything about the plaintexts.
	We first formalize the definition and security model for the MCFE-SI-NAS scheme and build a concrete construction based on asymmetric prime-order pairings.
	The security of our scheme is formally proven.
	Finally, we implement our MCFE-SI-NAS scheme and provide its efficiency analysis.
\end{abstract}

%\begin{graphicalabstract}
%\includegraphics[width=10cm]{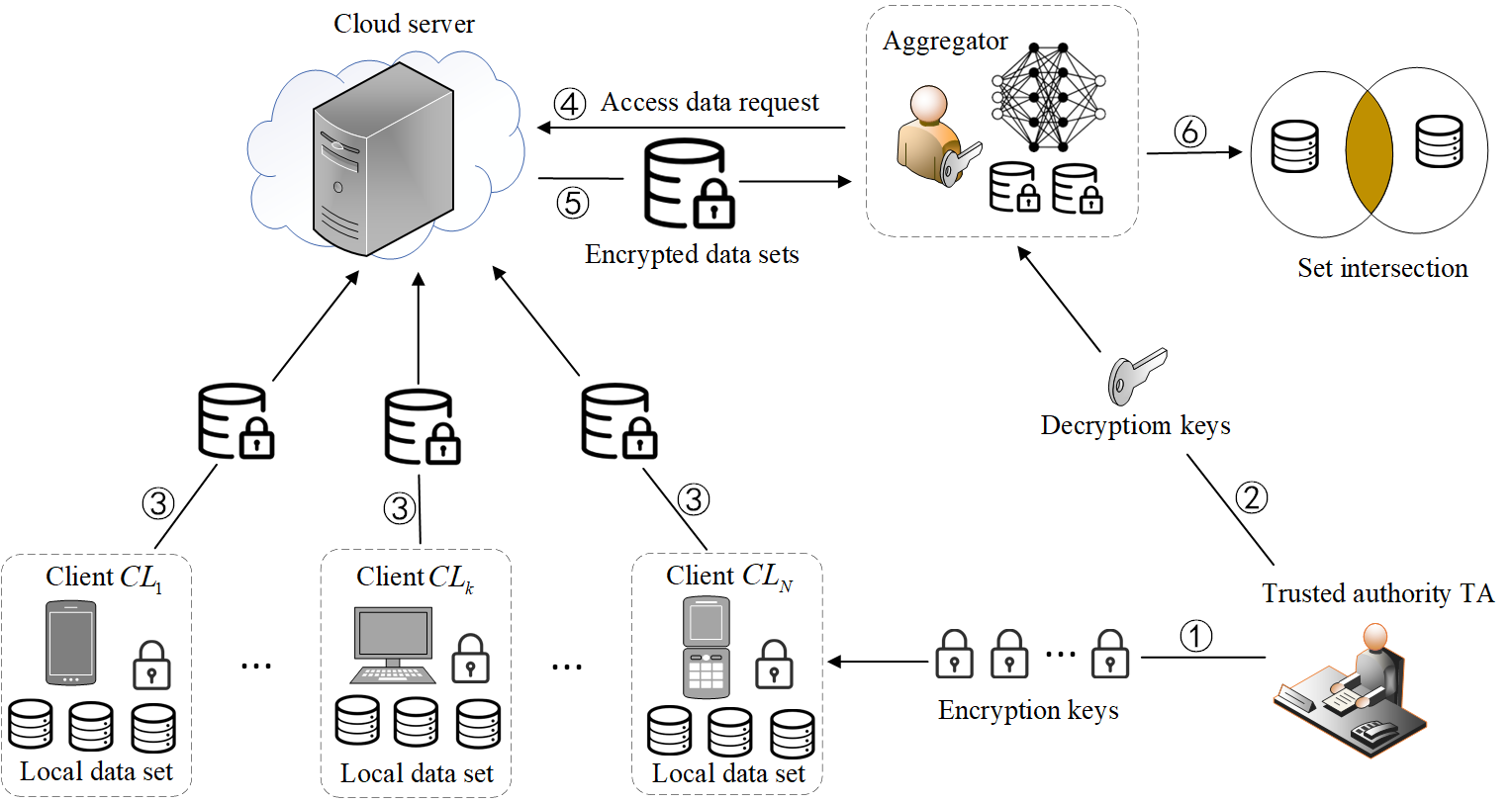}
%\end{graphicalabstract}

%\begin{highlights}
%	\item We first address non-monotonic access structures in multi-client functional encryption which can realize any access structures over attributes. 
%	\item The proposed scheme allows multiple clients co-exist and encrypt independently their data without any interaction.
%	\item Aggregator can aggregate ciphertexts, but cannot know anything about plaintexts.  
%	\item The designed scheme can resist "mix-and-match" attacks.
%	\item We first formalize the presented scheme consisting of definition and security model, and built a concrete construction.
%	\item The security of our scheme is formally proven, and implementation are provided to analyze its efficiency.
%\end{highlights}

%%%%%关键字
\begin{keywords}
functional encryption \sep set intersection \sep access control \sep security \sep federated learning
\end{keywords}

\maketitle

\section{Introduction}

	Federated Learning (FL) \cite{Kairouz2021} is a promising paradigm that has attracted extensive attention due to its advantages that a shared model is  trained collaboratively by a aggregator with multiple private inputs while ensuring raw data secure.
	However, FL paradigm still suffers from serious security issues, particularly inference attacks and sensitive data leakage problems.
	To tackle the above problems, several secure FL frameworks have been presented:
	FL based on secure multi-party computation (SMPC) \cite{Chen2024} \cite{Liu2024}, FL based on homomorphic encryption (HE) \cite{Gong2024} \cite{Yan2024} \cite{Zhang2020} and FL based on functional encryption (FE) \cite{Chang2023} \cite{Feng2024} \cite{Qian2024}. 
	FL based on SMPC frameworks employ secret sharing technology to share training parameters among multiple parties. 
	An aggregator interacts with a certain number of parties for decrypting and model training, which increases communication cost and disconnection risk. 
	FL based on HE frameworks support arithmetic operations on ciphertexts that allows update global model parameters with encrypted local gradients, and then aggregator obtains an encryption of the result.
	However, aggregator need extra interactions to recover the result.
	FE is a novel encryption technology equipped with the same features of computation over ciphertexts as SMPC and HE, but has an advantage over other solutions that no interaction is required.
	In a FE scheme, a decryption key is associated with a function, where authorized users directly decrypt ciphertexts and obtain the function values of encrypted data without disclosing any other information about encrypted data.
	Multi-client functional encryption (MCFE), where multiple clients encrypt data separately, have been applied in FL  to protect data confidentiality and train models \cite{Chang2023} \cite{Qian2024} \cite{Nguyen2023}.
	In an FL based on MCFE framework, multiple clients generate independently ciphertexts with their private inputs and an authorized aggregator collects clients' ciphertexts to aggregate data and train models.
	 
 	Model training usually requires the aggregation of datasets with same sample space and different feature domains.
 	However, in reality, it is nearly impossible to find two raw datasets from different clients that share same space.
	Hence, sample alignment is an important data preparation operation in model training. 
	Private set intersection (PSI) technology \cite{Liu2023PSI} \cite{Liu2020PSI} \cite{Yang2019PSI} \cite{He2022} \cite{Angelou2020} has been used to solve the above problem, which allows two or more parties exchange encrypted massage with each other to compute intersection of their private sets without revealing anything else, but it has a disadvantage that additional interactions between parties are required.
	Inspired by PSI protocols, MCFE for set interaction (MCFE-SI) scheme was proposed \cite{Kamp2019Two}, where a third party is responsible for calculating the set intersection of two clients’ sets without interacting with clients.  
	
	In addition, model poisoning attack \cite{Bagdasaryan2020} is an attack mode over FL global models, where malicious aggregators can directly influence global parameters and perform backdoor tasks. 	
	According to \cite{Li2023}, model poisoning attack seriously threatens availability of FL, because any unauthorized aggregator may substitute global models with malicious models for strengthening the poisoning effect.
	Hence, to prevent unauthorized aggregators from participating in model training, access control to training data is significant.
	To realize access control on encrypted data, MCFE with access control schemes \cite{Nguyen2022} \cite{Agrawal2021MultiParty} have been proposed, but only support monotonic access structures, which can not meet more complex access requirements.
%	To meet the complex access control needs of FL, non-monotonic access structures is more expressive and applicable, which can implement any access policy. 
%	In FL with multi-aggregators setting, clients enjoy the benefits of well-trained models through participating model training.
%	However, considering the costs related with model training, e.g., computation costs and data labeling, clients may refuse to contribute their data to the institutions with unrelated interests, because the utility from the model may be not enough to offset the cost. 
%	In practice, FL involves aggregators from a variety of organizations and has a complex access environment.
%	To protect data from being abused, clients may require their data to be accessed by specified aggregators. 
%	For example, recommendation model is important in finance industry, which can help develop personalized services and assess investment risk for each customer.	
%	In finance industry, finance customers who do not need insurance services upload their private finance data for data training, but they usually want to restrict their data to be aggregated by finance institutions or banks instead of insurance  companies. 	  
%	Hence, expressive access policy is interesting and desirable.
	Especially, in the complex FL environment, more expressive non-monotonic access structures are desirable and must be supported, but unfortunately has not been considered in existing MCFE-SI schemes.
	
	In this paper, we propose a MCFE-SI with non-monotonic access structures scheme, where aggregators' decryption keys embed fined-grained access policies and ciphertexts of each client are associated with an attribute set.
	Intersections can be calculated correctly if and only if the attribute set matches access policies of aggregators.
	To meet complex data access requirements in the FL, the proposed scheme supports more expressive non-monotonic access structures that can express any policy.

\subsection{Related Work}
\subsubsection{Functional Encryption}

	Waters \cite{Sahai2008} first introduced the concept of of FE which addresses the "all-or-nothing" issue (i.e., a decryptor is either able to recover the entire plaintext, or nothing) in public-key encryption schemes.
	Concretely, there exists a trusted authority TA responsible for generating a key $sk_{f}$ for a specified function $f$. 
	When given a ciphertext $CT_{x}$ and $sk_{f}$, the key holder learns the functional value $f(x)$ and nothing else.
	O'Neill \cite{ONeill2010} and Boneh et al. \cite{Boneh2011} provided formal definitions and security models for FE. 
	In the multi-user cases, Goldwasser et al. \cite{Goldwasser2014} first provided the definition of the multi-input functional encryption (MIFE), which supports multiple parties independently encrypt their data.
	However, MIFE schemes are vulnerable to "mix-and-match" attack since any client's ciphertext can be combined for decryption computation.
	For instance, suppose that two clients respectively encrypt $\{x_0,x_1\}$ and $\{y_0,y_1\}$, and a evaluator can calculate $f(x_{\mu_0},y_{\mu_1})$ for any combination of $\mu_0,\mu_1 = \{0,1\}$ , which leads to too much leakage.
	To resist this attack, multi-client functional encryption \cite{Shi2023} was proposed, where a label is applied to encrypt messages. As a result, ciphertexts can be combined to decrypt if and only if they contain the same label.
	
	There exists an inherent issue in MCFE schemes that a secret key can be used to recover the functional values of all ciphertexts.
	To address this problem, Abdalla et al. \cite{Abdalla2020} first proposed a FE scheme with fine-grained access control that combines attribute-based encryption (ABE) with FE for inner product (FEIP). 
	%In \cite{Abdalla2020}, an encryptor is allowed to embed a access policy into his/her ciphertexts so that decryptors whose attributes match the policy can compute correctly the inner product of encrypted data. 
	Inspired by the scheme \cite{Abdalla2020}, Nguyen et al. \cite{Nguyen2022} presented a duplicate-and-compress technique to transform a single-client FE scheme with access control into corresponding MCFE schemes.
	Dowerah et al. \cite{dowerah2024sacfe} designed an attribute-based functional encryption scheme which realizes fine-grained access control structures through monotone span programs, and supports to encrypt messages with unbounded length. 
	
	The above schemes require a fully trusted authority to generate keys. 
	Datta et al. \cite{Datta2023Decentralized}  proposed a decentralized multi-authority attribute-based inner-product FE scheme to remove the trusted authority. 
	Similarly, Agrawal et al. \cite{Agrawal2021MultiParty} presented an multi-authority FE scheme with linear secret-sharing structures based on composite-order bilinear maps. Unfortunately,  computation cost of composite-order bilinear maps is expensive. 
	The above FE with access control schemes realized monotonic access structures that contain "AND" gate, "OR" gate and threshold strategy, but did not address non-monotonic access structures.
	
\subsubsection{MCFE for Set Intersection}

	MCFE schemes for set intersection (MCFE-SI) was proposed first by Kamp et al. \cite{Kamp2019Two}. 
	However, set intersection in the scheme \cite{Kamp2019Two} can be publicly recovered by anyone.	
	To solve this issue, Lee et al. \cite{Lee2022} designed a concrete MCFE-SI scheme in asymmetric bilinear groups which is proved static security under their introduced assumptions.
	In \cite{Lee2022}, there exist $n$ clients and an evaluator, where each client encrypts their set  with an label and outsources the encrypted set to the evaluator.
	The evaluator receiving a functional key can calculate the set intersection from chiphertexts.
	Lee \cite{Lee2023} later proposed three efficient MCFE-SI schemes via a ciphertext indexing technology.
	Rafee \cite{Rafiee2023} presented a flexible MCFE-SI scheme, where discrete logarithm calculations are required for computing the final set intersection. 	
	However, the above MCFE-SI schemes do not consider access control problems.
	
\subsubsection{FE for Federated Learning}

	Qian et al. \cite{Qian2022} proposed a cloud-based privacy-preserving federated learning (PPFL) aggregation scheme based on FE, which is efficient in aggregation phase.
	In order to remove a trusted third party, Qian et al.\cite{Qian2024} later proposed a decentralized MCFE scheme for FL, which supports non-interactive partial decryption keys generation and client dropout.
	Chang et al. \cite{Chang2023} applied a dual-mode decentralized MCFE to design a new framework of PPFL, which prevents the private information of target users from being recovered by aggregator through uploading local models. 
	Feng et al. \cite{Feng2024} present a multi-input functional proxy re-encryption scheme for PPFL, which allows a semi-trusted central server to aggregate parameters without obtaining the intermediate parameters and aggregation results.
	
	The main differences between our scheme and the schemes \cite{Chang2023} \cite{Feng2024} \cite{Qian2024}  \cite{Qian2022} are as follows: 
	(1) our scheme can resist "mix-and-match" attacks, but the scheme \cite{Feng2024} is unable to address it; 
	(2) our scheme focus on set intersection operation, while the schemes \cite{Chang2023} \cite{Feng2024}  \cite{Qian2024}  \cite{Qian2022} execute inner product operation;  
	(3) our scheme can support fine-grained access control, while access issue is not considered in the schemes \cite{Chang2023} \cite{Feng2024}  \cite{Qian2024}  \cite{Qian2022}.
	
	We compare the properties of our MCFE-SI-NAS scheme with related schemes in Table \ref{tbl1}, in terms of function, access structures, resistance to "mix-and-match" attack and bilinear group. N/A denotes not applicable.   
	
	\begin{table*}[cols=5,pos=h]		
		{\fontfamily{ptm}\selectfont
		\caption{The comparision between our scheme and related schemes}
		\centering
		\label{tbl1}
		\begin{tabular}{|c|c|c|c|c|}
			\hline 
 			Schemes & Function & Access structures & Resistance to “mix-and-match” attack & Bilinear group \\
			\hline 
			\cite{Abdalla2020} & Inner product & Monotonic & N/A & Prime-order \\
			\hline
			\cite{Nguyen2022} & Inner product & Monotonic & \checkmark & Prime-order  \\
			\hline
			\cite{Datta2023Decentralized} & Inner product & Monotonic & N/A & Prime-order \\
			\hline
			\cite{dowerah2024sacfe} & Inner product & Monotonic & N/A & Prime-order \\
			\hline
			\cite{Agrawal2021MultiParty} & Inner product & Monotonic & N/A & Composite-order \\			
			\hline 
			\cite{Kamp2019Two} & Set intersection & \XSolid & \checkmark & Prime-order \\			
			\hline
			\cite{Lee2022} & Set intersection & \XSolid & \checkmark & Prime-order \\			
			\hline
			\cite{Lee2023} & Set intersection & \XSolid & \checkmark & Prime-order \\			
			\hline
			\cite{Rafiee2023} & Set intersection & \XSolid & \checkmark & Prime-order \\			
			\hline
			Our scheme & Set intersection & Non-monotonic & \checkmark & Prime-order \\			
			\hline
		\end{tabular}
		
		}
	\end{table*}

\subsection{Our Contributions}

%	Existing schemes only considered monotonic access structures that contains "AND" gate, "OR" gate and threshold strategy, but did not address non-monotonic access structures.
	Non-monotonic access structure is important in real application.
%	For instance, a financial customer dose not want his/her private data to be aggregated by any insurance company, because that usually lacks sufficient benefit return. 
	For instance, the documents of history department might be encrypted with the attributes: "Year:2024", "Department:history". 
	An aggregator who is authorized to aggregate data of historical departments but prohibited to access data of biological departments, and hence his/her decryption keys are related with the policy: "Year:2024" AND "Department:history" NOT "Department:biology".
	However, monotonic structures cannot express the above policy. 
	Non-monotone access structures is more expressive.
	In terms of the above problems, we first propose an MCFE-SI with non-monotonic access structures (MCFE-SI-NAS) scheme which can realize any policy including "AND", "OR", "NOT" as well as threshold policy.
	Our scheme enables each client to encrypt independently and upload data in a non-interaction manner.
	
	The contributions of our MCFE-SI-NAS scheme are as follows.
	
	(1) The proposed scheme allows multiple clients co-exist and encrypt their data independently, and all ciphertexts are bound with a label for resisting "mix-and-match" attack.
	
	(2) Our scheme also supports non-monotonic access structures that can realize any access structures over attributes.
	
	(3) Ciphertext indexing technology can be used to find intersections of ciphertext without decrypting. Aggregator can aggregate ciphertexts and output the set intersection of any two client plaintexts, but cannot learn anything about plaintexts. 
	
	(4) We first provide the definition and security model of our MCFE-SI-NAS scheme, and build concrete construction on asymmetric bilinear groups. The security proof of the MCFE-SI-NAS scheme is formally given. We implement and evaluate our MCFE-SI-NAS scheme, and provide efficiency analysis.

%	The proposed MCFE-SI-NAS scheme provides the following features. 
%	(1) The multiple clients work independently;
%	(2) The proposed scheme can resist "mix-and-match" attacks;
%	(3) Non-monotonic access structures are supported and can realize any access structures over attributes;
%	(4) The ciphertext indexing technology is used to find intersections of ciphertext without decrypting;
%	(5) Given the ciphertexts of a pair of clients, a third party who holds decryption keys can output the set intersection of them.
%	
%	The contributions of our MCFE-SI-NAS scheme are as follows.
%	(1) We first provide the definition and security model of our MCFE-SI-NAS scheme;
%	(2) The concrete construction of the MCFE-SI-NAS scheme is built on asymmetric bilinear groups;
%	(3) The security proof of the MCFE-SI-NAS scheme is formally given; 
%	(4) We implement and evaluate our MCFE-SI-NAS scheme, and provide efficiency analysis.
	
\subsection{Organization}
	
	The rest of this paper is organized as follows. 
	Section 2 shows the preliminaries used in this paper.
	In Section 3, we present the concrete construction of our MCFE-SI-NAS scheme.
	The security proof and implementation are described in Section 4 and Section 5, respectively.
	Section 6 concludes this paper.

\section{Preliminaries}
	The preliminaries used in this paper are introduced in this section. 
	Table \ref{tbl2} shows all symbols applied in this paper.	
	\begin{table}[width=.9\linewidth,cols=2,pos=h]
		
		\caption{Syntax}
		\label{tbl2}
		{\fontfamily{ptm}\selectfont % 局部设置字体接近 Times New Roman
 		\begin{tabular*}{\tblwidth}{@{\extracolsep{\fill}} p{2cm} p{5cm} }
			\toprule
			\text{Notions} & \text{Explanations} \\
			\midrule			
			$1^{\lambda}$ & A security parameter \\
			$d$ & The size of attribute in ciphertext \\
			$sk$ & A secret key \\
			$msk$ & Master secret keys \\
			$csk_k$ & $k$-th client's encryption keys \\
			$pp$ & Public parameters \\
			$\mathbb{A}$ & Monotonic access structures \\
			$\widetilde{\mathbb{A}}$ & Non-monotonic access structures \\			
			$N$ & The number of clients \\
			$f$ & An index function \\					
			$\mathcal{A}$ & A PPT adversary \\
			$\mathcal{C}$ & A challenger \\
			$\mathcal{B}$ & A simulator \\	
			$S$ & An attribute set \\				
			$\Upsilon$ & A set intersection \\
			$SK_{\widetilde{\mathbb{A}},f}$ & Decryption keys \\			
			$Tag$ & A label \\
			$M_k$ & A message set held by $k$-th client \\
			$CT_k$ & Ciphertexts corresponding to $M_k$ \\	 
			$\mathcal{HS}$ & Honest client sets \\
			$\mathcal{CS}$ & Corrupted client sets \\
			PPT & Probabilistically polynomial time \\		    
			FE & Functional encryption \\
			SI & Set intersection \\
			FL & Federated learning \\
			MCFE & Multi-client functional encryption \\		    
			MCFE-SI-NAS & MCFE for SI with non--monotonic  access structures \\			
			\bottomrule
		\end{tabular*}
	}
	\end{table}

\subsection{Bilinear Groups}
	
	\begin{definition} 
		\normalfont
		$G,\hat{G}$ and $G_T$ denote three cyclic groups with prime order $p$.
		$e:G \times \hat{G} \rightarrow G_T$ is a bilinear map if it satisfies the following properties \cite{BonehIBE}.
		
		(1) Bilinearity. If $g \in G$ and $\hat{g} \in \hat{G}$, the equation
		$e(g^x,\hat{g}^y) = e(g^y,\hat{g}^x) = e(g,\hat{g})^{xy}$ holds for any $x,y \in Z_p$.
		
		(2) Non-generation. For any $g \in G$ and $\hat{g} \in \hat{G}$, $e(g,\hat{g}) \neq 1$. 
		
		(3) Computability. $e(g,\hat{g})$ can be computed efficiently for any $g \in G$ and $\hat{g} \in \hat{G}$.
		
		$\mathcal{BG}(1^\lambda)\rightarrow(G,\hat{G},G_T,e,p)$ denotes a generator of bilinear groups, which inputs a security parameter $1^{\lambda}$ and outputs bilinear groups $(G,\hat{G},G_T,e,p)$.
		There are three types of pairings: Type-I, Type-II and Type-III. Type-III pairing provides good performance and is efficient.
		We select the Type-III pairing to build our MCFE-SI-NAS scheme in this paper to improve its efficiency.
	\end{definition}

\subsection{Complexity Assumptions}

	We utilize the assumptions introduced by Lee \cite{Lee2023} to prove the security of the proposed scheme.
	The complexity assumptions are defined as dynamic assumptions depending on the key queries of the adversary.
	
	We first define a function $J(N,\nu^*,\mathcal{Q})$ for demonstrating subsequent security proof.	
	Set $N$ be a positive integer and $\nu^* \in [N]$ be a targeted index.
	$\mathcal{Q}=\{(w,v)\}$ denotes a set of index pairs such that $w,v\in[N]$ and $w<v$.
	Suppose an index set 
	$J=\{
	\nu : 1 \leq \nu \neq \nu^* \leq N \vert  (\nu,\nu^*)\notin \mathcal{Q}  \text{ if } \nu<\nu^*$ and $(\eta^*,\nu)\notin \mathcal{Q} 
	\text{ if } \nu^* < \nu \}$.
	For generating a set $J$, the function $J(N,\nu^*,\mathcal{Q})$ is defined as follows .	
	\begin{flushleft}
		\centering
		%	\caption{Intersection function $SIF$}
		\begin{tabular}{l}		
			\hline	
%			\vspace{1ex} 	
			\textbf{Function} $J(N,\nu^*,\mathcal{Q}) \text { where } \mathcal{Q}=\{(w, v)\}$ \\			
			Set $J=\emptyset$. \\
			For each $\nu \in\{1, \ldots, N\} \backslash\{\nu^*\} \text { : }$\\
			\quad $\text { If } \nu<\nu^* \text { and }(\nu, \nu^*) \notin \mathcal{Q} \text {, then add } \nu \text { to } J \text {. }$ \\
			\quad Add $\text { If } \nu>\nu^* \text { and }(\nu^*, \nu) \notin \mathcal{Q} \text {, then add } \nu \text { to } J \text {. }$	 \\
			Output $J \text {. }$ \\
			\hline
		\end{tabular}
	\end{flushleft}	
		
	For example, suppose $N=5$, $\nu^*=2$ and $\mathcal{Q}=\{(1,5),(2,4),(3,4),(2,5)\}$, 	it can obtain $J=\{1,3\}$ since $(1,2)\notin \mathcal{Q}, (2,3)\notin \mathcal{Q}, (2,4)\in \mathcal{Q} \text{ and } (2,5)\in \mathcal{Q}$.
	
	\begin{definition}
		\normalfont
		Let $\mathcal{BG}(1^{\lambda})\rightarrow(G,\hat{G},G_T,e,p)$, $N,\nu^*,\mathcal{Q},J$ be defined above. $g,\hat{g}$ denote generators of $G,\hat{G}$, respectively.
		Given the following tuple  	
		\begin{equation*}
			D=\left( 
			\begin{array}{c}
				g, \hat{g}, g^a,\left\{g^{b_w}\right\}_{w=1}^N,\left\{g^{a b_\nu}\right\}_{\nu \in J}, \\
				\left\{\left(\hat{g}^{b_w c_{w, v}}, \hat{g}^{b_v c_{w, v}}, \hat{g}^{1 /\left(b_w+b_v\right)}\right)\right\}_{(w, v) \in \mathcal{Q}} 
			\end{array}
			\right) \text{ and } Z,
		\end{equation*}				
		we say that the assumption holds on $(g,\hat{g},G,\hat{G},G_T,e,p)$ if all PPT adversary $\mathcal{A}$ can distinguish $Z=Z_0=g^{ab_{\nu^*}}$ and random $Z=Z_1 \in G$ with the following negligible advantage $\epsilon(\lambda)$:
		\begin{equation*}
			\left|\operatorname{Pr}\left[\mathcal{A}\left(D,Z_0\right)=1\right]-\operatorname{Pr}\left[\mathcal{A}\left(D, Z_1\right)\right]=1\right| \leq \epsilon(\lambda)
		\end{equation*}
		
	\end{definition}
	
	\begin{definition}
		\normalfont
		($q$-Decision Bilinear Diffie-Hellman Exponent Assumption in Symmetric Parings \cite{SONG2021})  Let $\mathcal{BG}(1^{\lambda})\rightarrow(G,\hat{G},G_T,e,p).$  
		Set $G=\hat{G}$ and $g,h$ be generators of $G$. Given the following tuple  	 
		\begin{equation*}
			D_1=(h,g,g^{\gamma},g^{(\gamma^{2})},\ldots,g^{(\gamma^{q})},g^{(\gamma^{q+2})},\ldots,g^{(\gamma^{2q})}) \text{ and } H,
		\end{equation*}
		we say that the assumption holds on symmetric group $(g,G,G_T,e,p)$ if all PPT adversary $\mathcal{A}$ can distinguish $H=H_0=e(g,g)^{\gamma^{q+1}}$ and $H=H_1\in G_T$ with the following negligible advantage $\epsilon(\lambda)$:
		$$
		\left|\operatorname{Pr}\left[\mathcal{A}\left(D_1,H_0\right)=1\right]-\operatorname{Pr}\left[\mathcal{A}\left(D_1, H_1\right)\right]=1\right| \leq \epsilon(\lambda).	$$	
		
	\end{definition}
	
	\begin{definition}
		\normalfont
		(The Variant of the $q$-Decision Bilinear Diffie-Hellman Exponent Assumption in Asymmetric Parings ($q$-DBDHE))
		Let $\mathcal{BG}(1^{\lambda})\rightarrow(G,\hat{G},G_T,e,p).$ $g,h$ denote generators of $G$ and $\hat{g},\hat{h}$ are generators of $\hat{G}$.
		Given the following tuple 
		\begin{equation*}
			D_2= \left( 
			\begin{array}{c}
				(h,g,g^{\gamma},g^{(\gamma^2)},...,g^{(\gamma^q)},g^{(\gamma^{q+2})},...,g^{(\gamma^{2q})}, \\
				\hat{g},\hat{h},\hat{g}^{\gamma},\hat{g}^{(\gamma^2)},...,\hat{g}^{(\gamma^q)},\hat{g}^{(\gamma^{q+2})},...  ,\hat{g}^{(\gamma^{2q})})
			\end{array}
			\right) \text{ and } T, 	
		\end{equation*}
		we say that the variant of $q$-DBDHE assumption holds on asymmetric group $(h,g,\hat{g},G,\hat{G},G_T,e,p)$ if all PPT adversary $\mathcal{A}$ can distinguish $T=T_0=e(g,\hat{h})^{\gamma^{q+1}}$ and $T=T_1 \in G_T$ with the following negligible advantage $\epsilon(\lambda)$:
		$$
		\left|\operatorname{Pr}\left[\mathcal{A}\left(D_2,T_0\right)=1\right]-\operatorname{Pr}\left[\mathcal{A}\left(D_2, T_1\right)\right]=1\right| \leq \epsilon(\lambda).	$$	
	\end{definition}

\subsection{Access Structures}

	Let a set of parties $P=\{P_1,...,P_n\}$.
	A collection $\mathbb{A}$ is said to be monotone if $B \in \mathbb{A} \text{ and } B \subseteq C$, then $C \in \mathbb{A}$.
	A monotonic access structure is a monotonic collection $\mathbb{A} \subseteq 2^{P} \backslash \{\emptyset\}$.
	The sets in $\mathbb{A}$ are called the authorized sets and those not in $\mathbb{A}$ are unauthorized sets.

\subsection{Linear Secret-Sharing Schemes}

	Let $\mathcal{L}$ be a share-generating matrix for $\prod$. $\mathcal{L}$ is equipped with $o$ rows and $c$ columns.
	$P$ denotes a set of parties.
	Let $\rho:\{1,\ldots, o\} \rightarrow P$ be a mapping that maps a row of $\mathcal{L}$ to a party.
	A secret-sharing scheme $\prod$ over a set of parties $P$ is called linear secret-sharing scheme (over $Z_p$) if it contains the following algorithms:
	\begin{itemize}
		\item Share: it inputs a secret $\alpha \in Z_p$ and selects randomly $s_2,...,s_c \in Z_p$. Let $\overrightarrow{\zeta}=(\alpha,s_2,...,s_c)^\top$. It outputs  $\mathcal{L} \overrightarrow{\zeta}$ as the vector of $o$ shares of the secret $\alpha$. The share $\lambda_i = (\mathcal{L}\overrightarrow{\zeta})_i$ belongs to a party $\rho(i)$.
		
		\item Recon: it inputs a set $S \in \mathbb{A}$, and sets $I = \{i | \rho(i)\in S\}$.
		There exists a set of constants $\{\pi_i\}_{i\in I}$ satisfying that $\sum_{i\in I} \pi_{i} \cdot \lambda_i=\alpha$.	
	\end{itemize}
	
	\begin{proposition}
		\label{proposition}
		\cite{Goyal2006} A vector $\overrightarrow{vec}$ is linearly independent of a series of vectors represented by a matrix $\mathcal{L}$ if and only if there is a vector $\overrightarrow{v}$ satisfying that $\mathcal{L} \cdot \overrightarrow{v} = \overrightarrow{0}$ and $\overrightarrow{vec}\cdot \overrightarrow{v}=1$.
	\end{proposition}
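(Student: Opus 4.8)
The plan is to read ``$\overrightarrow{vec}$ is linearly independent of the rows of $\mathcal{L}$'' as the statement $\overrightarrow{vec}\notin V$, where $V\subseteq Z_p^{c}$ is the subspace spanned by the $o$ rows of $\mathcal{L}$, and to recognize the condition $\mathcal{L}\cdot\overrightarrow{v}=\overrightarrow{0}$ as saying that $\overrightarrow{v}$ lies in the kernel $W:=\ker \mathcal{L}$, which under the standard dot product on $Z_p^{c}$ is exactly the orthogonal complement $V^{\perp}$. Thus the whole proposition reduces to the equivalence: $\overrightarrow{vec}\notin V$ if and only if there exists $\overrightarrow{v}\in V^{\perp}$ with $\overrightarrow{vec}\cdot\overrightarrow{v}=1$. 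I would prove the two directions separately.

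For the direction ``existence of such $\overrightarrow{v}$ $\Rightarrow$ independence'', I would argue by contraposition: if $\overrightarrow{vec}\in V$, write $\overrightarrow{vec}=\sum_{i} c_i \overrightarrow{r_i}$ as a combination of the rows $\overrightarrow{r_i}$ of $\mathcal{L}$. Then for every $\overrightarrow{v}$ with $\mathcal{L}\cdot\overrightarrow{v}=\overrightarrow{0}$ we have $\overrightarrow{r_i}\cdot\overrightarrow{v}=0$ for all $i$, hence $\overrightarrow{vec}\cdot\overrightarrow{v}=\sum_i c_i(\overrightarrow{r_i}\cdot\overrightarrow{v})=0\neq 1$, so no vector meeting both requirements can exist. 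This direction is a one-line computation.

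The substantive direction is ``independence $\Rightarrow$ existence''. Assume $\overrightarrow{vec}\notin V$, and consider the linear functional $\ell:W\to Z_p$ given by $\ell(\overrightarrow{v})=\overrightarrow{vec}\cdot\overrightarrow{v}$, restricted to $W=V^{\perp}$. Its image is a subspace of the line $Z_p$, so it is either $\{0\}$ or all of $Z_p$; in the latter case some $\overrightarrow{v}\in W$ satisfies $\ell(\overrightarrow{v})=1$, which is precisely what we need. It remains to rule out $\mathrm{im}\,\ell=\{0\}$, which would say $\overrightarrow{vec}\cdot\overrightarrow{v}=0$ for all $\overrightarrow{v}\in V^{\perp}$, i.e. $\overrightarrow{vec}\in (V^{\perp})^{\perp}$.

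The main obstacle, and the only place where care is required, is establishing $(V^{\perp})^{\perp}=V$ over the finite field $Z_p$, since there the dot product is not positive definite and a nonzero subspace can even be self-orthogonal, so the familiar inner-product-space picture does not transfer verbatim. I would instead close the argument by dimension counting: the form $\langle x,y\rangle = x^{\top}y$ is a nondegenerate bilinear form on $Z_p^{c}$, which yields $\dim V+\dim V^{\perp}=c$ for every subspace $V$; applying this identity twice gives $\dim (V^{\perp})^{\perp}=\dim V$, and since $V\subseteq (V^{\perp})^{\perp}$ holds unconditionally, the two spaces coincide. Hence $\overrightarrow{vec}\in (V^{\perp})^{\perp}=V$ would contradict the assumption $\overrightarrow{vec}\notin V$, forcing $\mathrm{im}\,\ell=Z_p$ and producing the required $\overrightarrow{v}$, which completes the equivalence.
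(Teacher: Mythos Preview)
Your proof is correct. Note, however, that the paper does not actually supply its own proof of this proposition: it is stated with a citation to \cite{Goyal2006} and then invoked as a black box inside the security reduction. So there is no ``paper's approach'' to compare against; your argument is a clean, self-contained justification of the cited fact, and the care you take with the double-orthogonal identity $(V^{\perp})^{\perp}=V$ over $Z_p$ via nondegeneracy and dimension counting is exactly the right way to handle the finite-field subtlety.
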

	
\subsection{System Model}

	The framework of our MCFE-SI-NAS scheme is shown in the Figure \ref{FIG:1}. 
	Our system model contains four types of entities, namely a trusted authority TA, a aggregator, $N$ clients $\{CL_1, CL_2, \ldots, CL_N\}$ and a cloud server CSP.
	TA is a fully trusted party that is responsible for issuing encryption keys to clients, and calculates secret keys for aggregator with specified access policy and function.
	Each client $CL_k$ works independently and encrypts their own data sets using the encryption keys from TA.
	All clients' ciphertexts are uploaded to CSP in a non-interactive manner.
	When receiving decryption keys from TA, the aggregator executes computation over ciphertexts and model training.

	\begin{figure*}
		\centering
		\includegraphics[width=1.6\columnwidth]{}
		\caption{System model of our MCFE-SI-NAS scheme.}
		\label{FIG:1}
	\end{figure*}	
	
	Our MCFE-SI-NAS scheme for label space $\mathcal{T} \in G_T$ contains the following four algorithms.
	
	$Setup(1^\lambda,d,N) \rightarrow (sk,csk_1,csk_2,...,csk_N,pp)$.
	The algorithm is executed by the TA, and takes as input a security parameter $1^{\lambda}$, a preset attribute number $d$ and a client number $N$. 
	It outputs a secret key $sk$, client encryption keys $csk_1,csk_2,...,csk_N$ and public parameters $pp$, 
	where master secret keys are $msk = \{sk,csk_1,csk_2,...,csk_N\}$.
	
	$KeyGen(msk,pp,f,\widetilde{\mathbb{A}}) \rightarrow SK_{\widetilde{\mathbb{A}},f}$.
	The algorithm is executed by the TA.
	It takes as input master secret keys $msk$, public parameters $pp$, an index function $f$ and an access structure $\widetilde{\mathbb{A}}$. Then, it outputs decryption keys $SK_{\widetilde{\mathbb{A}},f}$ corresponding to $\widetilde{\mathbb{A}}$ and $f$.
	
	$Enc(pp,S,Tag,M_k,csk_k) \rightarrow CT_k$.
	The algorithm is executed by each client $CL_k$, where $k \in [N]$.
	It takes as input public parameters $pp$, a set of attributes $S$, a label $Tag \in \mathcal{T}$, a massage set $M_k$ and corresponding client encryption keys $csk_k$. It outputs ciphertexts $CT_k$.
	
	$Dec(pp,CT_w,CT_v,SK_{\widetilde{\mathbb{A}},f}) \rightarrow M_{w}\bigcap M_{v}/\perp$.
	The algorithm is executed by aggregator.
	It takes as input public parameters $pp$, clients' ciphertexts $CT_k$, decryption keys $SK_{\widetilde{\mathbb{A}},f}$ and
	outputs an intersection set $M_{w}\bigcap M_{v}$ or a special
	symbol $\perp$ denoting a failure.

	\begin{definition}
		\normalfont
		(Correctness)
		A multi-client functional encryption for set intersection with 	non-monotonic access structures (MCFE-SI-NAS) scheme is correct if
		$$
		Pr \left[
		\begin{array}{c|l}
			&Setup(1^\lambda,d,N) \rightarrow (sk, \\
			Dec(pp,CT_w, &csk_1,csk_2,...,csk_N,pp);\\			 
			CT_v, SK_{\widetilde{\mathbb{A}},f}) &KeyGen(msk,pp,f,\widetilde{\mathbb{A}}) \\ \rightarrow M_{w}\bigcap M_{v} &\rightarrow SK_{\widetilde{\mathbb{A}},f}; \\        
			& Enc(pp,S,Tag,M_k,csk_k) \\
			& \rightarrow CT_k
		\end{array}
		\right] = 1
		$$
	\end{definition}
	
\subsection{Security Model}
	
	We define the indistinguishability (IND) security \cite{Lee2023} \cite{Rafiee2023} \cite{Chotard2018} of our MCFE-SI-NAS scheme by using the following game executed between a adversary $\mathcal{A}$ and a challenger $\mathcal{C}$.
	
	\textbf{Init.} $\mathcal{A}$ initially submits an honest client set $\mathcal{HS} \subset [N]$ 
	and a set of corrupted clients $\mathcal{CS}=\{1,...,N\} \setminus \mathcal{HS}$.
	In addition, $\mathcal{A}$ selects a targeted label $Tag^*$, an attribute set $S^*$, 
	an index set $\mathcal{Q}=\{(w,v)\}$ of function key queries,
	two challenging massage sets $\{M_{1,0}^*,...,M_{N,0}^*\}$ and $\{M_{1,1}^*,...,M_{N,1}^*\}$ with the following restrictions. 
	
	(1) $w,v\in [N]$ for each $(w,v)\in\mathcal{Q}$. 
	
	(2) $CSI(\{M_{k,0}^*\}_{k\in [N]},\mathcal{Q}) = CSI(\{M_{k,1}^*\}_{k\in[N]},\mathcal{Q})$.
	
	%%%设置阶段
	\textbf{Setup.} $\mathcal{C}$ executes the algorithm $Setup(1^\lambda,d,N) \rightarrow (sk,csk_1,csk_2,...,csk_N,pp)$
	and generates the secret key $sk$, encryption keys $csk_1,...,csk_N$ and public parameters $pp$. 
	$\mathcal{C}$ keeps $(sk,\{csk_k\}_{k\in \mathcal{HS}})$ and sends $(pp,\{csk_k\}_{k\in \mathcal{CS}})$ to $\mathcal{A}$.
	
	%%%询问阶段一
	\textbf{Phase-1.} $\mathcal{A}$ makes decryption key queries for function $f=(w,v)\in \mathcal{Q}$ and many access structures $\widetilde{\mathbb{A}}_q$, 
	where $S^*\notin \widetilde{\mathbb{A}}_q$ for all $q$. 
	$\mathcal{C}$ runs algorithm $KeyGen(msk,pp,\mathcal{Q},\widetilde{\mathbb{A}}_q)  \rightarrow SK_{\widetilde{\mathbb{A}}_q,\mathcal{Q}}$.
	Then $SK_{\widetilde{\mathbb{A}}_q,\mathcal{Q}}$ are sent to $\mathcal{A}$.
	
	%%%挑战阶段
	\textbf{Challenge.} $\mathcal{C}$ flips a coin and obtains a bit $\mu\in\{0,1\}$.
	$\mathcal{C}$ executes the algorithm $Enc(pp,S^*,Tag^*,  M_{k,\mu}^*,csk_{k}) \rightarrow CT_{k,\mu}$ for each $k \in [N]$.
	The challenged ciphertexts $\{CT_{k,\mu}\}_{k\in[N]}$ are sent to $\mathcal{A}$.
	
	%%%询问阶段二
	\textbf{Phase-2.} $\mathcal{A}$ continues to issue decryption key queries as in \textbf{Phase-1}.
	
	%%%猜测阶段
	\textbf{Guess.} The guess $\mu'$ of $\mu$ is outputted by $\mathcal{A}$.  $\mathcal{A}$ wins the game if $\mu'=\mu$.
	
	We consider two weaker security notions \cite{Rafiee2023} for our MCFE-SI-NAS scheme:
	\begin{itemize}
		\item Passive security (P-IND). There is no corruption among clients, i.e., $\mathcal{HS}=[N]$ and $\mathcal{CS}=\emptyset$.
		\item Static security (S-IND). The corrupted client sets are chosen before init phase.
	\end{itemize}
	
	In this paper, we construct a MCFE-SI-NAS scheme with P-IND security level.
	 
	\begin{definition}	
		A MCFE-SI-NAS scheme is P-IND secure if and only if all PPT adversaries $\mathcal{A}$ win the above game with the following negligible advantage $\epsilon(\lambda)$:
		$$
		Adv_{\mathcal{A}} = |Pr[\mu=\mu'] - \frac{1}{2}| < \epsilon(\lambda).
		$$
	\end{definition}

%%%%% 方案具体构造
\section{The Construction of Our MCFE-SI-NAS Scheme}

	In this section, we present the detailed construction of our MCFE-SI-NAS scheme which contains four algorithm, namely
	$Setup$, $KeyGen$, $Enc$ and $Dec$ algorithms which shown in Figure 2 $\thicksim$ Figure 5, respectively.
	
	\textit{Correctness.}
	Our MCFE-SI-NAS scheme is correct since the following equations hold.	
	\begin{equation*}
		\begin{split} 
			&\widetilde{sk}_{i,1}^{(1)} = sk_{i,1}^{(1)} \cdot \prod_{j=2}^{d} k_{i,j}^{{(1)}^{y_j}} \\
			&= g^{\lambda_i} \cdot u_0^{rt_i} \cdot (u_1^{-\frac{\theta_{i,2}}{\theta_{i,1}}} \cdot u_2)^{y_2 t_i} \cdot \cdot \cdot  (u_1^{-\frac{\theta_{i,d}}{\theta_{i,1}}} \cdot u_d)^{y_d t_i} \\
			&= g^{\lambda_i} \cdot u_0^{rt_i} \cdot (u_1^{-\frac{\theta_{i,1}y_1+\theta_{i,2}y_2+...+\theta_{i,d}y_d-\theta_{i,1}y_1}{\theta_{i,1}}}\cdot u_2^{y_2} \cdot \cdot \cdot u_d^{y_d})^{t_i} \\
			&= g^{\lambda_i} \cdot u_0^{rt_i} \cdot (u_1^{\frac{-<\overrightarrow{\theta},\overrightarrow{Y}>}{\theta_{i,1}}} \cdot u_1^{y_1} \cdot \cdot \cdot u_d^{y_d})^{t_i} \\
			&= g^{\lambda_i} \cdot (u_0^r \cdot u_1^{y_1} \cdot \cdot \cdot u_d^{y_d})^{t_i};
		\end{split}
	\end{equation*}
		
	\begin{equation*}
		\begin{split}
			&\frac{e(\widetilde{sk}_{i,1}^{(1)},ct_{1,w})}{e(ct_{2,w},sk_{i,2}^{(1)})} \\
			&= \frac{e(g^{\lambda_i} \cdot (u_0^r \cdot u_1^{y_1} \cdot \cdot \cdot u_d^{y_d})^{t_i},\hat{g}^{s_w})}{e((u_0^r \cdot \prod_{i=1}^{d}u_i^{y_i})^{s_w},\hat{g}^{t_i})} \\
			&= \frac{e(g^{\lambda_i},\hat{g}^{s_w}) \cdot e((u_0^r \cdot u_1^{y_1} \cdot \cdot \cdot u_d^{y_d})^{t_i},\hat{g}^{s_w})}{e((u_0^r \cdot u_1^{y_1} \cdot \cdot \cdot u_d^{y_d})^{s_w},\hat{g}^{t_i})} \\
			&= e(g,\hat{g})^{\lambda_i s_w};
		\end{split}
	\end{equation*}
	
	\begin{equation*}
		\begin{split}
			&\widetilde{sk}_i^{(2)} = \prod_{j=2}^{d} k_{i,j}^{(2)^{y_j}}  \\
			&= (h_1^{-\frac{\theta_{i,2}}{\theta_{i,1}}r} \cdot h_2)^{t_i \cdot y_2} \cdot \cdot \cdot (h_1^{-\frac{\theta_{i,d}}{\theta_{i,1}}r} \cdot h_d)^{t_i \cdot y_d} \\                    
			&= (h_1^{-\frac{\theta_{i,2}}{\theta_{i,1}} \cdot ry_2} \cdot \cdot \cdot h_1^{-\frac{\theta_{i,d}}{\theta_{i,1}} \cdot ry_d})^{t_i} \cdot (h_2^{y_2} \cdot \cdot \cdot h_d^{y_d})^{t_i} \\
			&= (h_1^{-\frac{<\overrightarrow{\theta_i},\overrightarrow{Y}>}{\theta_{i,1}}r} \cdot h_1^{ry_1} \cdot h_2^{y_2} \cdot \cdot \cdot h_d^{y_d})^{t_i}; 
		\end{split}
	\end{equation*}	
	
	{\fontsize{8.0pt}{11.4pt}\selectfont
		\begin{equation*}
			\begin{split}
				& e(sk_{i,1}^{(2)},ct_{1,w}) \cdot \left(\frac{e(\widetilde{sk}_i^{(2)},ct_{1,w})}        {e(ct_{3,w},sk_{i,2}^{(2)})}\right)^{\frac{\theta_{i,1}}{<\overrightarrow{\theta_i},\overrightarrow{Y}>}}  \\
				&= e(g^{\lambda_i} \cdot h_1^{rt_i},\hat{g}^{s_w}) \cdot 
				\left(\frac{e((h_1^{-\frac{<\overrightarrow{\theta_i},\overrightarrow{Y}>}{\theta_{i,1}}r} \cdot h_1^{ry_1} \cdot h_2^{y_2} \cdot \cdot \cdot h_d^{y_d})^{t_i},\hat{g}^{s_w})}{e((h_1^{ry_1}\prod_{i=2}^{d}h_i^{y_i})^{s_w},\hat{g}^{t_i})}\right)^{\frac{\theta_{i,1}}{<\overrightarrow{\theta_i},\overrightarrow{Y}>}} \\
				&= e(g^{\lambda_i} \cdot h_1^{rt_i},\hat{g}^{s_w}) \cdot 
				\left(\frac{e((h_1^{-\frac{<\overrightarrow{\theta_i},\overrightarrow{Y}>}{\theta_{i,1}}r} \cdot h_1^{ry_1} \cdot h_2^{y_2} \cdot \cdot \cdot h_d^{y_d})^{t_i},\hat{g}^{s_w})}{e((h_1^{ry_1} h_2^{y_2} \cdot \cdot \cdot h_d^{y_d})^{s_w},\hat{g}^{t_i})}\right)^{\frac{\theta_{i,1}}{<\overrightarrow{\theta_i},\overrightarrow{Y}>}} \\
				&= e(g^{\lambda_i} \cdot h_1^{rt_i},\hat{g}^{s_w}) \cdot
				{e(h_1,\hat{g})^{- r t_i s_w}} \\
				&= e(g,\hat{g})^{\lambda_i s_w};
			\end{split}
		\end{equation*}
	}
	
	\begin{equation*}
		\begin{split}
			&C_{w,\eta} = \frac{ct_{w,\eta}^{(0)}}{\prod_{i\in I}e(g,\hat{g})^{\pi_i \lambda_i s_w}} \\
			&= \frac{M_{w,\eta} \cdot e(g,\hat{g})^{^{\tilde{\alpha}s_w}}\cdot e(H(M_{w,\eta}\cdot Tag),\hat{g}^r)^{b_w}}
			{e(g,\hat{g})^{\sum_{i\in I} \pi_i \lambda_i s_w}} \\
			&= M_{w,\eta} \cdot e(H(M_{w,\eta}\cdot Tag),\hat{g})^{r b_w};
		\end{split}
	\end{equation*}
	
	\begin{equation*}
		\begin{split}
			&e\left(ct_{w,\eta}^{(1)},sk_{f,2}\right) \\
			&= e\left(H\left(M_{w,\eta}\cdot Tag\right)^{a_w},\hat{g}^{a_v \dot{r}}\right)\\
			&= e\left(H\left(M_{w,\eta}\cdot Tag\right),\hat{g}\right)^{a_w a_v \dot{r}} \\
			&= e\left(H\left(M_{v,\eta}\cdot Tag\right)^{a_v},\hat{g}^{a_w \dot{r}}\right) \\
			&= e\left(ct_{v,\eta}^{(1)},sk_{f,1}\right)
		\end{split}
	\end{equation*}
	
	and	
	
	\begin{equation*}
		\begin{split}
			&\frac{C_{w,\eta}}{e(ct_{w,\eta}^{(1)}\cdot ct_{v,\eta}^{(1)},sk_{f,3})} \\
			&=\frac{M_{w,\eta} \cdot e(H(M_{w,\eta}\cdot Tag),\hat{g})^{r b_w}}{e(H(M_{w,\eta}\cdot Tag)^{a_w} \cdot H(M_{v,\eta}\cdot Tag)^{a_v},\hat{g}^{\frac{r \cdot b_w}{a_w+a_v}})} \\
			&=\frac{M_{w,\eta} \cdot e(H(M_{w,\eta}\cdot Tag),\hat{g})^{r b_w}}{e(H(M_{w,\eta}\cdot Tag)^{a_w+a_v},\hat{g}^{\frac{r \cdot b_w}{a_w+a_v}})} \\
			&= M_{w,\eta}.        
		\end{split}
	\end{equation*}
	
		\begin{figure*}
		\centering
		{\fontfamily{ptm}\selectfont
			\fbox{
				\centering
				\begin{minipage}{\textwidth}
					%%%%% 设置阶段
					$Setup(1^\lambda,d,N) \rightarrow (sk,csk_1,csk_2,...,csk_N,pp)$.
					Set $d,N\in\mathbb{N}$ be the size of attribute set of every ciphertext and the number of clients respectively.
					Let $\mathcal{BG}(1^\lambda) \rightarrow (G,\hat{G},G_T,e,p)$.  
					$g,\hat{g}$ denote generators in $G$ and $\hat{G}$ respectively.
					It picks randomly $\overrightarrow{\alpha}=(\alpha_1,...,\alpha_d)^\top \in Z_p^d$, $\overrightarrow{\beta}=(\beta_0,...,\beta_d)^\top \in Z_p^{d+1}$
					and computes $h_i=g^{\alpha_i} \in G$ and $u_j=g^{\beta_j} \in G$ for each $i\in[1,d]$, $j\in[0,d]$.
					Then, two vectors $\overrightarrow{H}=(h_1,...,h_d)^\top$ and $\overrightarrow{U}=(u_0,..,u_d)^\top$ are defined.
					The algorithm picks a random value $\tilde{\alpha} \in Z_p^*$ as the master secret key and calculates $e(g,\hat{g})^{\tilde{\alpha}}$.
					It chooses a hash function $\mathcal{H}: G_T \rightarrow G$.
					In addition, it chooses $a_1,...,a_N,b_1,...,b_N \in Z_p^*$ randomly and defines client encryption key as $csk_k=(a_k,b_k)$,             where $k \in [1,N]$.
					Each client encryption key $csk_k$ is sent to $k^{th}$ corresponding client $CL_k$.
					Hence, the master secret keys and public parameters are respectively
					$$
						msk=(\tilde{\alpha},a_1,...,a_N,b_1,...,b_N) \text{ and }             	pp=(G,\hat{G},G_T,e,p,g,\hat{g},e(g,\hat{g})^{\tilde{\alpha}},\overrightarrow{H},\overrightarrow{U},\mathcal{H},N).
					$$                       
				\end{minipage}
				
			}
			\caption{The Setup algorithm of our MCFE-SI-NAS scheme.}
			\label{FIG:Setup}
		}
	\end{figure*}
	
	\begin{figure*}
		{\fontfamily{ptm}\selectfont
			\fbox{			
				\begin{minipage}{\textwidth}
					%%%%% KeyGen密钥生成阶段    
					%%% 嵌入访问策略
					$KeyGen(msk,pp,f,\widetilde{\mathbb{A}}) \rightarrow SK_{\widetilde{\mathbb{A}},f}$.
					%%% 函数部分密钥
					Given an index function $f=(w,v)$ such that $w, v \in [1,N]$ and $w < v$, it picks randomly $\dot{r}, r\in Z_p$ and calculates $\{sk_{f,1}, sk_{f,2}, sk_{f,3}\}$ as follows using the corresponding client encryption keys $csk_w=\{a_w,b_w\},csk_v=\{a_v,b_v\}$:
					$$sk_{f,1} = \hat{g}^{a_w \cdot \dot{r}},sk_{f,2} = \hat{g}^{a_v \cdot \dot{r}},sk_{f,3} = \hat{g}^{\frac{r \cdot b_w}{a_w + a_v}}.$$ Let $\hat{g}^r,u_0^r,h_1^r$ be public.            
					%%% 策略部分的密钥
					$\widetilde{\mathbb{A}}$ denotes a non-monotonic access structure such that $\widetilde{\mathbb{A}}=NM(\mathbb{A})$ 
					for the monotonic access structure $\mathbb{A}$,
					where $\mathbb{A}$ is related with a linear secret sharing scheme $\prod$ over an attribute set $P$. 
					By applying $\prod$, it outputs the shares $\{\lambda_i =\mathcal{L}_i\overrightarrow{\zeta}\}_{i \in \{1,2,...,o\}}$ of the master secret key $\tilde{\alpha}$.
					The corresponding party to the share $\lambda_i$ is set as $\breve{x_i} \in P$, where $\breve{x_i}$ is an attribute and can be unprimed(non negated) or primed(negated).
					For each $i \in \{1,2,...,o\}$, the algorithm picks $t_i \in Z_p$ randomly and defines a vector $\overrightarrow{\theta_i} = (\theta_{i,1},...,\theta_{i,d})^\top = (1,x_i,x_i^2,...,x_i^{d-1})^\top$, i.e., $\theta_{i,j}=x_i^{j-1}$.
					The algorithm creates the policy keys $sk_{\widetilde{\mathbb{A}},i}$ as follows. 
					For clarity, $\hat{x}$ denotes a non-negated attribute and $\bar{x}$ stands for a negated attribute.
					$$
						sk_{\widetilde{\mathbb{A}},i} 
						= \left\{ 
						\begin{array}{lcl}
							sk_{i,1}^{(1)} = g^{\lambda_i} \cdot u_0^{rt_i}, sk_{i,2}^{(1)} = \hat{g}^{t_i}, k_{\overrightarrow{\theta_i},i}^{(1)} = g^{t_i \cdot \Delta_{\overrightarrow{\theta_i}}^\top \cdot \overrightarrow{\beta}}  & \mbox{for} & \breve{x_i} = \widehat{x_i} \mbox{ (non-negated)} \\
							sk_{i,1}^{(2)}= g^{\lambda_i} \cdot h_1^{rt_i}, sk_{i,2}^{(2)}= \hat{g}^{t_i}, k_{\overrightarrow{\theta_i},i}^{(2)}=g^{t_i r \cdot \Delta_{\overrightarrow{\theta_i}}^\top \cdot \overrightarrow{\alpha}}  & \mbox{for} &  \breve{x_i} = \overline{x_i} \mbox{ (negated)}
							
						\end{array}
						\right\}, 
					$$        
					where            
					$k_{\overrightarrow{\theta_i},i}^{(1)} = (k_{i,2}^{(1)},...,k_{i,d}^{(1)})
					=((u_1^{-\frac{\theta_{i,2}}{\theta_{i,1}}} \cdot u_2)^{t_i},...,(u_1^{-\frac{\theta_{i,d}}{\theta_{i,1}}} \cdot u_d)^{t_i})
					=g^{t_i \cdot \Delta_{\overrightarrow{\theta_i}}^\top \cdot \overrightarrow{\beta}'},
					$
					$k_{\overrightarrow{\theta_i},i}^{(2)}= (k_{i,2}^{(2)},...,k_{i,d}^{(2)})
					= ((h_1^{-\frac{\theta_{i,2}}{\theta_{i,1}}r} \cdot h_2)^{t_i},...,(h_1^{-\frac{\theta_{i,d}}{\theta_{i,1}}r} \cdot h_d)^{t_i})
					= g^{t_i r \cdot \Delta_{\overrightarrow{\theta_i}}^\top \cdot \overrightarrow{\alpha}},$
					$\overrightarrow{\beta}' = (\beta_1,...,\beta_d)^{\top}$ and $\Delta_{\overrightarrow{\theta_i}} = \left( \begin{array}{cccc}
						-\frac{\theta_{i,2}}{\theta_{i,1}} & -\frac{\theta_{i,3}}{\theta_{i,1}} & ... &-\frac{\theta_{i,d}}{\theta_{i,1}} \\ \multicolumn{4}{c}{I_{d-1}} 
					\end{array}\right)$.
					Hence, decryption keys are $$SK_{\widetilde{\mathbb{A}},f} = (sk_{f,1}, sk_{f,2}, sk_{f,3},\{sk_{\widetilde{\mathbb{A}},i}\}_{\{x_i \in P\}}).$$
				\end{minipage}		
			}
			\caption{The KeyGen algorithm of our MCFE-SI-NAS scheme.}
			\label{FIG:KeyGen}
		}
	\end{figure*}
	
	\begin{figure*}
		{\fontfamily{ptm}\selectfont
			\fbox{			
				\begin{minipage}{\textwidth}
					% %%%%% Enc加密阶段
					%%% 嵌入属性集合
					$Enc(pp,S,Tag,M_k,csk_k) \rightarrow CT_k$.
					Let plaintext set $M_k=\{{M_{k,1},M_{k,2},...,M_{k,l}}\} \in G_T$ is held by corresponding $k^{th}$ client $CL_k$, where every client encrypt same size of plaintext set and $|M_k|=l$. 
					We assume that each client $CL_k$ encrypts its plaintext set $M_k$ under the same attribute set $S$ satisfying $|S|=q<d$.            
					$CL_k$ first defines a polynomial $P_S[X] = \sum_{i=1}^{q+1}(y_i \cdot X^{i-1}) = \prod_{j \in S}(X-j)$ 
					whose coefficients make up the first $q+1$ coordinates of vector $\overrightarrow{Y} = (y_1,..,y_d)^\top$.
					If $q+1<d$, set $y_j = 0$ for $q+2 \leq j \leq d$.
					%% 属性集合密文
					Then,  $CL_k$ selects a random value $s_k \in Z_p$ 
					and computes $$ct_{1,k} = \hat{g}^{s_k}, ct_{2,k} = (u_0^r \cdot \prod_{i=1}^{d}u_i^{y_i})^{s_k}, ct_{3,k} = (h_1^{ry_1} \prod_{i=2}^{d}h_i^{y_i})^{s_k}.$$
					
					%% 交集密文
					For each $\eta \in [1,2,...,l]$, $CL_k$ uses a label $Tag \in G_T$ to calculate 			$$ct_{k,\eta}^{(0)} = M_{k,\eta} \cdot e(g,\hat{g})^{^{\tilde{\alpha}s_k}}\cdot e(H(M_{k,\eta} \cdot Tag),\hat{g}^r)^{b_k}, ct_{k,\eta}^{(1)} = H(M_{k,\eta}\cdot Tag)^{a_k}.$$
					
					%完整密文
					The ciphertext underlying the $M_k=\{M_{k,\eta}\}_{\eta \in \{1,...,l\}}$ are 
					$$CT_k = (\{ct_{k,\eta}^{(0)}\}_{\eta \in [1,2,...,l]},\{ct_{k,\eta}^{(1)}\}_{\eta \in [1,2,...,l]},ct_{1,k},ct_{2,k},ct_{3,k}).$$
					All ciphertexts $\{CT_k\}_{k \in [1,2,...,N]}$ are uploaded to CSP.
				\end{minipage}		
			}
			\caption{The Enc algorithm of the MCFE-SI-NAS scheme.}
			\label{FIG:Enc}
		}
	\end{figure*}
	
	\begin{figure*}
		{\fontfamily{ptm}\selectfont
			\fbox{			
				\begin{minipage}{\textwidth}
					%%%%% Dec解密阶段
					$Dec(pp,CT_w,CT_v,SK_{\widetilde{\mathbb{A}},f}) \rightarrow M_{w}\bigcap M_{v} / \perp$.
					Aggregator requests data from CSP and is responded with the ciphertexts $CT_w$ and $CT_v$.
					Assume that the attribute set $S$ in the ciphertext matches successfully the non-monotonic access structure $\widetilde{\mathbb{A}}$ of the aggregator's decryption key, so that decryption is possible.
					Otherwise, the algorithm outputs $\perp$.            
					Recall that $\widetilde{\mathbb{A}} = NM(\mathbb{A})$, where $\mathbb{A}$ is a monotonic access structure related with a linear secret-sharing scheme $\prod$. 
					Set $S' = N(S) \in \mathbb{A}$ and $I = \{i:\breve{x_i} \in S'\}$. 
					It uses $\overline{x_i}$ to denote negated attribute $\breve{x_i} \in S'$ (i.e., $x_i \notin S$) and  $\widehat{x_i}$ to stand for the non-negated attribute $\breve{x_i} \in S'$ (i.e., $x_i \in S$).
					Since $S' \in \mathbb{A}$, there exists a set of coefficients $\{\pi_{i}\}_{i \in I}$ such that $\sum_{i \in I}(\pi_i \lambda_i) = \tilde{\alpha}$.
					Set the polynomial $P_S[X] = \prod_{j \in S}(X-j) = \sum_{i=1}^{q+1}(y_i X^{i-1})$ whose coefficients are contained in the vector $\overrightarrow{Y} = (y_1,...,y_d)^\top$.          
					The aggregator executes the following decryption procedure as follows.            
					%non-nagated属性解密
					\begin{itemize}
						\item For the non-negated attribute $\widehat{x_i}$, compute                  
						$\frac{e(sk_{i,1}^{(1)} \cdot \prod_{j=2}^{d} k_{i,j}^{{(1)}^{y_j}},ct_{1,w})}{e(ct_{2,w},sk_{i,2}^{(1)})}
						= e(g,\hat{g})^{\lambda_i s_w}$.   
					\end{itemize}        
					%nagated属性解密
					\begin{itemize}
						\item  For each negated attribute $\overline{x_i}$, set a vector
						$\overrightarrow{\theta_i} = (1,x_i,...,x_i^{d-1})^\top$ and calculates                  
						$
						e(sk_{i,1}^{(2)},ct_{1,w}) \cdot (\frac{e(\prod_{j=2}^{d} k_{i,j}^{(2)^{y_j}},ct_{1,w})}
						{e(ct_{3,w},sk_{i,2}^{(2)})})^{\frac{\theta_{i,1}}{<\overrightarrow{\theta_i},\overrightarrow{Y}>}}
						= e(g,\hat{g})^{\lambda_i s_w}.  
						$                       
					\end{itemize}
					%%% 解密，消去属性密文部分
					%            Let $I=I_0\bigcup I_1$, where $I_0$ and $I_1$ denote unprimed and primed attributes, respectively.
					Then, the algorithm computes the intermediate ciphertext 
					%			$C_{w,\eta} = \frac{ct_{w,\eta}^{(0)}}{\prod_{i\in I}e(g,\hat{g})^{\pi_i \lambda_i s_w}}                      
					%			= M_{w,\eta} \cdot e(H(M_{w,\eta}\cdot Tag),\hat{g})^{r b_w}.$
					$$
							C_{w,\eta} = \frac{ct_{w,\eta}^{(0)}}{\prod_{i\in I}e(g,\hat{g})^{\pi_i \lambda_i s_w}}                      
							= M_{w,\eta} \cdot e(H(M_{w,\eta}\cdot Tag),\hat{g})^{r b_w}.
					$$
					%%%  密文index阶段
					When the item $M_{w,\eta}$ of the ciphertext $CT_w$ and the item $M_{v,\eta}$ of the ciphertext $CT_v$ are the same, we can derive
					the equation $e(ct_{w,\eta}^{(1)},sk_{f,2}) = e(ct_{v,\eta}^{(1)},sk_{f,1})$ for ciphertext indexing.          
					A set $\Upsilon  = \emptyset$ is initialized.
					Aggregator calculates 
					$$\frac{C_{w,\eta}}{e(ct_{w,\eta}^{(1)}\cdot ct_{v,\eta}^{(1)},sk_{f,3})} = M_{w,\eta}.$$   
					We utilize $M_{\eta}$ to denote the above result.
					It adds all items $M_{\eta}$ into $\Upsilon$. Finally, the algorithm outputs the set $\Upsilon$.
				\end{minipage}		
			}
			\caption{The Dec algorithm of the MCFE-SI-NAS scheme.}
			\label{FIG:Dec}
		}
	\end{figure*}

\section{Security Analysis}
	
	In this section, the security of our MCFE-SI-NAS scheme is formally proved.
	\begin{theorem}	
		The proposed MCFE-SI-NAS scheme is P-IND secure in the random oracle model if the assumptions \cite{Lee2023} and the variant of the $q$-DBDHE assumption hold.
	\end{theorem}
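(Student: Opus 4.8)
The plan is to prove P-IND security through a sequence of hybrid games that isolate the two independent masking terms protecting each challenge message $M_{k,\eta}^*$ inside $ct_{k,\eta}^{(0)}$: the access-control mask $e(g,\hat{g})^{\tilde{\alpha}s_k}$, whose removal requires satisfying the non-monotonic policy, and the set-intersection mask $e(\mathcal{H}(M_{k,\eta}^*\cdot Tag^*),\hat{g})^{rb_k}$, whose removal requires a valid function key and reveals only intersection membership. First I would program the random oracle $\mathcal{H}$ so that every query on a string $M\cdot Tag^*$ is answered with a freshly sampled element of $G$ whose exponents (including embedded assumption components) are recorded, which later lets the simulator produce the index components $ct_{k,\eta}^{(1)}=\mathcal{H}(M_{k,\eta}^*\cdot Tag^*)^{a_k}$ consistently and recognise repeated function-key queries under the same label $Tag^*$.

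The first reduction targets the variant of the $q$-DBDHE assumption and hides the access-control mask. Because the adversary is restricted to key queries with $S^*\notin\widetilde{\mathbb{A}}_q$, Proposition~\ref{proposition} supplies a vector $\overrightarrow{v}$ that is orthogonal to the rows of $\mathcal{L}$ associated with the satisfied attributes yet satisfies $(1,0,\ldots,0)\cdot\overrightarrow{v}=1$; this is the standard hook for embedding the challenge exponent $\gamma^{q+1}$ into the LSSS shares $\lambda_i$ of $\tilde{\alpha}$, so that every requested policy key $sk_{\widetilde{\mathbb{A}},i}$ remains simulatable from the instance $D_2$ while the target term $e(g,\hat{g})^{\tilde{\alpha}s_k}$ coincides with the challenge $T$. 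The delicate point is to simulate both the non-negated keys $sk_{i,1}^{(1)}$ and the negated keys $sk_{i,1}^{(2)}$ together with the polynomial vectors $\overrightarrow{\theta_i}$ and the matrices $\Delta_{\overrightarrow{\theta_i}}$, and to check that the interpolation used for negated attributes never divides by $\langle\overrightarrow{\theta_i},\overrightarrow{Y}\rangle=0$, which holds exactly because $x_i\notin S^*$ on negated rows. After this game the access-control mask of every challenge ciphertext is an independent uniform element of $G_T$, so each $M_{k,\eta}^*$ is masked solely by its set-intersection term.

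The second reduction targets the dynamic assumptions of Lee and hides every challenge message that does not lie in the common set intersection. Here the game's index set $\mathcal{Q}$ is matched to the assumption's query set, the client exponents $a_k,b_k$ are mapped to the components $\{g^{b_w}\}$ and the function-key randomness, and the reconstruction-blocking set produced by $J(N,\nu^*,\mathcal{Q})$ singles out the honest index $\nu^*$ whose ciphertext can be randomised without detection through any queried function key $sk_{f,3}=\hat{g}^{rb_w/(a_w+a_v)}$. For a pair $(w,v)\in\mathcal{Q}$ and a position $\eta$ with $M_{w,\eta}^*\neq M_{v,\eta}^*$, the product $ct_{w,\eta}^{(1)}\cdot ct_{v,\eta}^{(1)}=\mathcal{H}(M_{w,\eta}^*\cdot Tag^*)^{a_w}\mathcal{H}(M_{v,\eta}^*\cdot Tag^*)^{a_v}$ fails to collapse under the pairing with $sk_{f,3}$, so the mask cannot be stripped and the message is replaceable by its challenge-bit alternative; for positions inside the intersection the restriction $CSI(\{M_{k,0}^*\}_{k\in[N]},\mathcal{Q})=CSI(\{M_{k,1}^*\}_{k\in[N]},\mathcal{Q})$ makes the two worlds identically distributed. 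Iterating over all honest indices $\nu^*$ switches every non-intersection message from its $\mu=0$ value to its $\mu=1$ value.

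I expect the main obstacle to be maintaining one consistent random-oracle table across both reductions while embedding two structurally different assumption instances simultaneously: the oracle answers must encode the $q$-DBDHE powers needed for the access-control simulation and, at the same time, carry the $a_k$-exponent structure required to build the indexing components $ct_{k,\eta}^{(1)}$ and to detect repeated decryption-key queries. Reconciling these two embeddings, and arguing that the simulator's implicit handle on $b_k$ suffices to publish $\hat{g}^r,u_0^r,h_1^r$ and to answer function-key queries without ever leaking enough to let $\mathcal{A}$ remove a mask it is not entitled to, is the technical core; once established, a routine summation of the per-game advantages bounds $Adv_{\mathcal{A}}$ by a negligible function $\epsilon(\lambda)$.
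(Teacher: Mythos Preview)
Your hybrid-game decomposition is a legitimate and in fact more standard route than what the paper does: the paper runs a \emph{single} monolithic reduction in which the challenger hands $\mathcal{B}$ both assumption instances at once, ties them to one coin $\psi$ (so $T=e(g,\hat h)^{\gamma^{q+1}}$ and $Z=g^{ab_{\nu^*}}$ together, or both random together), and simulates all policy keys, function keys, and the random oracle from the combined tuple. Your sequential plan---first swap the access-control blinding via the $q$-DBDHE variant, then walk through the honest indices $\nu^*$ with Lee's dynamic assumption---gives a cleaner isolation of the two mechanisms and a more readable advantage bound; the paper's approach avoids the sum over $N$ intermediate hybrids but at the cost of an unusual ``joint'' distinguishing game.

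Two points deserve tightening. First, the access-control mask $e(g,\hat g)^{\tilde\alpha s_k}$ is \emph{per client}, not per message: for a fixed $k$ every $ct_{k,\eta}^{(0)}$ carries the same factor, so after your first hop the masks are not ``independent uniform elements of $G_T$'' across $\eta$. What you actually obtain is a single fresh $R_k$ per client, and the residual secrecy of $M_{k,\eta}$ then comes \emph{entirely} from the hash-based factor $e(\mathcal H(M_{k,\eta}\cdot Tag^*),\hat g)^{rb_k}$ together with $ct_{k,\eta}^{(1)}$; you should phrase the intermediate game accordingly. Second, the ``main obstacle'' you flag---keeping one oracle table that simultaneously embeds both assumption instances---is really the difficulty of the paper's combined reduction, not of your own: in a hybrid argument each transition has its own simulator and its own oracle programming, so the two embeddings never coexist. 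The genuine delicate step in your plan is rather to show, in the first reduction, that the simulator can still answer the function-key queries $sk_{f,3}=\hat g^{rb_w/(a_w+a_v)}$ and publish $\hat g^{r},u_0^{r},h_1^{r}$ while $r$ and $\tilde\alpha$ are entangled with the $q$-DBDHE instance; once you pick $r$ honestly and embed only $\tilde\alpha$ and $s_k$, this goes through as in the paper.
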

	
	\begin{proof}	
		
		We first define the following intersection function $SIF((M_\eta)_{\eta \in [N]},\mathcal{Q})$.
		Given a tuple $(M_k)_{k \in [N]}$ and a index set $\mathcal{Q}=\{(w,v)\}$,
		$SIF((M_k)_{k \in [N]},\mathcal{Q})$ is able to calculate the collected intersection of the $M_w$ and $M_v$ for every $(w,v)\in \mathcal{Q}$.
		\begin{flushleft}
			%	\centering
			%	\caption{Intersection function $SIF$}
			\begin{tabular}{l}		
				\hline					 	
				\textbf{Function} $SIF((M_k)_{k \in [N]},\mathcal{Q}) \text { where } \mathcal{Q}=\{(w, v)\}$ \\			
				Set $E_k=\emptyset \text { for all } k \in [N] \text {.}$ \\
				For each $(w,v)\in \mathcal{Q} \text { : }$ \\
				\quad Compute the intersection set $SI=M_w \cap M_v.$ \\
				\quad Add $m \text{ into } E_w \text{ and } E_v.$ \\
				Output $(E_k)_{k\in[N]}$. \\
				\hline
			\end{tabular}
		\end{flushleft}
		
		Then, suppose that a PPT adversary $\mathcal{A}$ attacks our MCFE-SI-NAS scheme with advantage $\epsilon(\lambda)$.
		A simulatior $\mathcal{B}$ is built to play the security game with $\mathcal{A}$ to solve the variant of the $q$-DBDHE and the hard problem assumption in \cite{Lee2023}.
		
		%%%初始化阶段
		\textbf{Init.}
		%%%%%ABE的初始化
		$\mathcal{A}$ selects a targeted attribute set $S^*$ that is used to define a vector $\overrightarrow{Y}=(y_1,y_2,...,y_d)^\top$. 
		Hence, a polynomial $P_{S^*}[X]= \prod_{j \in S^{*}}(X-j) = \sum_{i=1}^{q+1}(y_i \cdot X^{i-1}) $ is defined
		whose coefficients make up the first $q+1$ coordinates of vector $\overrightarrow{Y}$,
		where $y_j = 0$ for $q+2 \leq j \leq d$.
		%%%%FE-SI初始化
		In addition, $\mathcal{A}$ selects two challenging massage tuples $\{M_{1,0}^*,...,M_{N,0}^*\}$ and $\{M_{1,1}^*,...,M_{N,1}^*\}$,
		a targeted tag $Tag^*$, a set of function key queries $Q=\{(w,v)\}$.
		According to the above definition of $\nu^*,\rho,\mathcal{Q}$, $\mathcal{B}$ executes the $J(N,\nu^*,\mathcal{Q})$ and obtains the set $J$.
		Then, $\mathcal{{B}}$ flips a bit $\mu \in \{0,1\}$ randomly and 
		obtains the set $(E_1^*,...,E_N^*)$ by executing the $SIF(\{M_{k,\mu}^*\}_{k\in [N]},\mathcal{Q})$.
		
		%%%%  困难问题元组
				 
		Challenger $\mathcal{C}$ flips a fair coin $\psi \in \{0,1\}$.
		If $\psi=0$, set $T=e(g,h)^{z_{d+1}}$ and $Z=g^{ab_\nu^*}$.
		Otherwise, $\psi=1$, set $T \in G_T, Z \in G$.
		Then, $\mathcal{C}$ transfers the tuples $(h,\hat{h},g,\hat{g},z_1,...,z_d, z_{d+2},...,z_{2d} \hat{z}_1,...,\hat{z}_d,\hat{z}_{d+2},...,\hat{z}_{2d},T)$, $(g^a,\left\{g^{b_w}\right\}_{w=1}^N,\left\{g^{a b_\nu}\right\}_{\nu \in J},  \{(\hat{g}^{b_w c_{w, v}}, \hat{g}, \hat{g}^{b_v c_{w, v}}) \}_{(w, v) \in Q},Z)$ to $\mathcal{B}$, where $z_i = g^{(\gamma^i)}$ and $\hat{z}_i=\hat{g}^{(\gamma^i)}$.
		$\mathcal{B}$ will output his guess $\psi'$ on $\psi$.
		
		%%%设置阶段
		
		\textbf{Setup.}
		%%%%%公钥的模拟
		The simulation of the public keys that can be clssified as three types.
		
		%%%%  常规部分
		(1) \textit{Public key for comment element}. 
		It selects a random value $\vartheta \in Z_p$ 
		and computes $e(z_1,\hat{z}_n)^{\vartheta}=e(g,\hat{g})^{\gamma^{n+1}\vartheta}$ from the tuple.
		Hence, the master key $\tilde{\alpha}$ is  implicitly set as $\tilde{\alpha}=\gamma^{n+1}\vartheta$
		
		%%%%  非消极部分
		(2) \textit{Public keys for non-negated attributes}. 
		It chooses $\delta_0\in Z_p$ randomly 
		and calculates $u_0=g^{\delta_0}\cdot g^{-<\overrightarrow{\gamma},\overrightarrow{Y}>}$.
		$\mathcal{B}$ picks a random vector $\overrightarrow{\delta} \in Z_p^d$ 
		and computes $\overrightarrow{U}'=(u_1,...,u_d)^\top=g^{\overrightarrow{\gamma}}\cdot g^{\overrightarrow{\delta}}$.
		Hence, the value $\overrightarrow{\beta}'$ is implicitly set as $\overrightarrow{\beta}' = (\beta_1,...,\beta_d)^\top = \overrightarrow{\gamma}+\overrightarrow{\delta}$.
		
		%%%%消极属性部分
		(3) \textit{Public keys for negated attributes}. 
		Set $S^*=\{x_1,...,x_q\}$ 
		and the corresponding vectors $\overrightarrow{X}_1,...,\overrightarrow{X}_q$ is defined as $X_\iota = (1,x_\iota,...,x_\iota^{d-1})^\top$.
		$\mathcal{B}$ defines a vector as $\overrightarrow{b_\iota}$ such that 
		$\overrightarrow{b_\iota}^\top\cdot M_{X_\iota}=\overrightarrow{b_{\iota}}^\top \cdot \left( \begin{array}{cccc} -x_\iota & -x_\iota^2 & ... &-x_\iota^{d-1}\\ \multicolumn{4}{c}{I_{d-1}} \end{array}\right)=\overrightarrow{0}$.
		Hence, $\mathcal{B}$ obtains a matrix $\mathbf{B}=(\overrightarrow{b_1}|...|\overrightarrow{b_q}|\overrightarrow{0}|...|\overrightarrow{0})$ 
		Then, $\mathcal{B}$ selects randomly $\overrightarrow{\theta}\in Z_p$ and defines $\overrightarrow{H}$ 
		as $\overrightarrow{H}=g^{\mathbf{B}\overrightarrow{\gamma}}g^{\overrightarrow{\theta}}$.
		Hence, the value $\overrightarrow{\alpha}$ is implicitly set as $\mathbf{B}\overrightarrow{\gamma}+\overrightarrow{\theta}$.

		%%%% SI设置哈希
		In addition, $\mathcal{B}$ selects randomly $b_1,...,b_N \in Z_p$, and then prepares a hash list $H$-list for $\{M_{k,\eta}\cdot Tag\}$ that is initially empty. For each $k \in [N]$ and $\eta \in [l]$, $H$-list is updated as follows.
		
		\begin{itemize}
			\item If the $\{M_{k,\eta}\cdot Tag\}$ exists in the $H$-list, $\mathcal{B}$ retrieves the 
			corresponding tuple $(M_{k,\eta}\cdot Tag,u,g^u)$ from the $H$-list and sends $g^u$ to $\mathcal{A}$.
			\item Otherwise, $\mathcal{B}$ calculates:
			\begin{itemize}
				\item{If $k\neq \nu^*$ or $\eta \neq \eta^*$, it selects a random value $u_{k,\eta}' \in Z_p$ and  the tuple $(M_{k,\eta,\mu}^*\cdot Tag^*,u_{k,\eta}',g^{u_{k,\eta}'})$ is added into the list $H$-list.}
				\item{Otherwise $(k=\nu^* \wedge \eta=\eta^*)$, the tuple $(M_{\nu^*,\eta^*,\mu}^*\cdot Tag^*,-,g^a)$ is added into the list $H$-list.}
			\end{itemize}	
		\end{itemize}
		The public parameters 
		$pp=\{G,\hat{G},G_T,e,p,g,\hat{g},e(g,\hat{g})^{\gamma^{n+1}\vartheta},\\ \overrightarrow{U},\overrightarrow{H},H-\text{list},N\}$ are sent to $\mathcal{A}$.
		
		%%%询问阶段一
		\noindent
		\textbf{Phase-1.}
		$\mathcal{A}$ submits policy key queries with non-monotonic access policy $\tilde{\mathbb{A}}$ with the restriction that
		$\tilde{\mathbb{A}}$ does not match the challenged attribute set $S^*$ i.e., $R(S^*,\tilde{\mathbb{A}}) \neq 1$.
		We assume that $\mathbb{A} = NM(\tilde{\mathbb{A}})$ is defined over a party set $P$, related with a LSSS $\prod$.
		Therefore, we obtain that $R(S',\mathbb{A})\neq 1$, where $S'=NM(S^*)$.
		Let $I=\{i:\breve{x}_i \in S' \}$ be the attribute index in $S'$.
		We denote the $\breve{x}_i$ as the attributes in $S'$ while the underlying $x_i$ as the attributes in $S^*$.
		Set $\mathcal{L} \in \mathbb{Z}_P^{o\times d}$ be the share-generating matrix for $\prod$.
		Since $R(S',\mathbb{A})\neq 1$, $\overrightarrow{1}=(1,0,...,0)^\top$  is linearly independent of the rows of $\mathcal{L}_{S'}$ 
		which is the sub-matrix of $\mathcal{L}$ formed by rows corresponding to attributes in $S'$.
		According to the proposition \ref{proposition}, there exists a coefficient vector $\overrightarrow{\pi}\in Z_p^d$ which satisfies $<\overrightarrow{1},\overrightarrow{\pi}>=\pi_1=1$, $\mathcal{L}_{S'}\cdot \overrightarrow{\pi}=\overrightarrow{0}$ and can be efficiently computed .
		Then, we define a vector $\overrightarrow{v}=\overrightarrow{\zeta}+(\tilde{\alpha}-\zeta_1)\overrightarrow{\pi}$ 
		where $\overrightarrow{\zeta}=(\zeta_1,...,\zeta_d)^\top \in Z_p^d$ are randomly chosen.
		(Note that $\overrightarrow{v}_1 = \tilde{\alpha}$ and that $v_2,...,v_d \in Z_p$ are uniformly distributed.)
		We implicitly set the shares $\lambda_i=\mathcal{L}_i\cdot \overrightarrow{v}$.
		Therefore, we have $\lambda_i=\mathcal{L}_i\cdot\overrightarrow{v}=\mathcal{L}_i \cdot \overrightarrow{\zeta}$ is independent on $\tilde{\alpha}$
		for any $\lambda_i$ such that $\breve{x_i}\in S'$.
		$\mathcal{B}$ calculates the policy keys as follows.
		For ease of description, the $\breve{x_i}$ is classified as $\bar{x_i}$ (negated attribute) and $\widehat{x_i}$ (non-negated attribute).
		
		%%%% 模拟开始
		
		%%%%%%%非消极属性
		(1) For each $\widehat{x}_i=x_i$, there also exist two situations.
		\begin{itemize}
			\item{
				If $\hat{x_i} \in S^*$, $\lambda_i=<\overrightarrow{\mathcal{L}_i},\overrightarrow{v}>$ is independent on $\tilde{\alpha}$
				and is known by $\mathcal{B}$.
				Hence, $\mathcal{B}$ selects $r,t_i \in Z_p$ randomly and outputs the keys
				$\mathcal{B}$ calculates a tuple
				$$
				(sk_{i,1}^{(1)'} = g^{\lambda_i}u_0^{rt_i},sk_{i,2}^{(1)'}=\hat{g}^{t_i},k_2^{(1)'},\ldots,k_d^{(1)'})
				$$
				%where $k_i^{(1)'} = g^{t_i \cdot \Delta_{\overrightarrow{\theta_i}^\top}\cdot\overrightarrow{\beta}}=(u_i^{-\frac{\theta_{i,}}{}}\cdot u_i)^{t_i}$
			}
			
			\item{
				If $\hat{x_i}\notin S^*$, $\lambda_i=<\overrightarrow{\mathcal{L}_i},\overrightarrow{v}>$ is denoted by the form $\lambda_i=\omega_1 \tilde{\alpha}+\omega_2$,
				where the two contants $\omega_1,\omega_2$ are known by the $\mathcal{B}$.
				Then, $\mathcal{B}$ set the $d \times (d-1)$ matrix as follows.
				\begin{equation*}
					\begin{split}
						M_{\overrightarrow{\theta_i}} &= \left( \begin{array}{cccc}
							-\frac{\theta_{i,2}}{\theta_{i,1}} & -\frac{\theta_{i,3}}{\theta_{i,1}} & ... &-\frac{\theta_{i,d}}{\theta_{i,1}} \\ \multicolumn{4}{c}{I_{d-1}} 
						\end{array}\right) \\
						& = \left(
						\begin{array}{cccc}
							-x_i & -x_i^2 & ... & -x_i^{d-1} \\
							\multicolumn{4}{c}{I_{d-1}} 
						\end{array}
						\right).
					\end{split}
				\end{equation*}
				
				Since $\hat{x_i}\notin S^*$, $\mathcal{B}$ defines a vector $\overrightarrow{\xi}=(\xi_1,\ldots,\xi_n)^\top=(1,x_i,x_i^2,\ldots,x_i^{n-1})^\top$ 
				with $\overrightarrow{\xi}^{\top} M_{\overrightarrow{\theta}_i}$ and $<-\overrightarrow{Y},\overrightarrow{\xi}> \neq 0$.
				In addition, set $\tilde{t}=t+\vartheta(\zeta_1\gamma^d+\zeta_2\gamma^{d-1}+...+\zeta_d\gamma)/<\overrightarrow{Y},\overrightarrow{\zeta}>$. 
				Therefore, $\mathcal{B}$ calculates the tuple like 
				$$
				(sk_{i,1},sk_{i,2},k_2,\ldots,k_d) = (g^{\tilde{\alpha}}\cdot u_0^{r\tilde{t}},\hat{g}^{\tilde{t}},g^{\tilde{t_i}M_{\overrightarrow{\theta}_i}\beta'}),
				$$
				where $\beta' = (\beta_1,...,\beta_d)^\top$.
				Then,
				for any vector $\overrightarrow{e} \in Z_p^d$, the $\gamma^{d+1}$ in the $\tilde{t}<\overrightarrow{e},\overrightarrow{\gamma}>$ 
				is $\vartheta <\overrightarrow{e},\overrightarrow{\zeta}>/<\overrightarrow{Y},\overrightarrow{\zeta}>$.
				When $M_{\overrightarrow{\theta_i}}^\top \overrightarrow{\zeta} = \overrightarrow{0}$ and
				$\overrightarrow{e}^\top$ is successfully set as the rows of $M_{\overrightarrow{\theta_i}}^\top$.
				Hence, we have (unknown) $z_{d+1} = g^{(\gamma)^{n+1}}$ can be canceled in $g^{\tilde{t}M_{\theta_i}^{\top}\overrightarrow{\beta}'}$.
				We set $\overrightarrow{Y}=\overrightarrow{e}$ and obtain the following result.
				$$
				g^{\tilde{\alpha}}\cdot u_0^{\tilde{t}} = 
				z_{n+1}^{\vartheta} \cdot (g^{\delta_0} \cdot g^{-<\gamma,\overrightarrow{Y}>})^{\tilde{t}},
				$$
				which can be efficiently calculated, since the coefficient of the $\gamma^{d+1}$ is the $-\vartheta$ in the $-\tilde{t}<\overrightarrow{\gamma},\overrightarrow{Y}>$.
				Given the tuple $(sk_{i,1},sk_{i,2},k_2,\ldots,k_d)$,
				$\mathcal{B}$ can compute the $(sk_{i,1}^{(1)'},sk_{i,2}^{(1)'},k_2^{(1)'},\ldots,k_d^{(1)'})$ using the same way.
			}
		\end{itemize}
		
		%%%  消极属性
		(2) For each $\bar{x}_i$, there exist two situations. 
		(Note that according to the above definition, $\bar{x_i} \in S'$ if and only if $x_i \notin S^*$.)
		
		\begin{itemize}
			\item{
				
				If $\bar{x} \notin S'$ ($x_i\in S^*$), the share $\lambda_i = <\overrightarrow{\mathcal{L}}_i,\overrightarrow{v}>$ depends on $\tilde{\alpha}$ 
				and hence can be denoted as $\lambda_i = \omega_1\tilde{\alpha}+\omega_2$, where the two contants $\omega_1,\omega_2 \in Z_p$ are known by $\mathcal{B}$.
				Since $x_i \in S^*=\{S_1,...,S_q\}$, set $x_i=S_\varsigma$ for some $\varsigma \in [1,2,...,q]$.
				Thus, $\mathcal{B}$ chooses $t \in Z_p$ randomly and generate the following tuple:
				\begin{equation*}
					\begin{split}
						&\left(sk_{i,1}^{(2)},sk_{i,2}^{(2)},k_2^{(2)},\ldots,k_d^{(2)}\right) \\
						= 						(g^{\tilde{\alpha}}\cdot & h_{1}^{rt}, \hat{g}^{t},
						(h_{1}^{{-\frac{\theta_{i,2}}{\theta_{i,1}}r}}\cdot h_{2})^{t},\ldots,
						(h_{1}^{{-\frac{\theta_{i,d}}{\theta_{i,1}}r}}\cdot h_{d})^{t}
						),
					\end{split}					
				\end{equation*}					
				where $\overrightarrow{\theta}_i=(\theta_{i,1},...,\theta_{i,d})^\top = \overrightarrow{X}_\varsigma=(1,S_\varsigma,...,S_\varsigma^{(d-1)})=(1,x_i,...,x_i^{n-1})$ and (unknown) $t_i, r \in Z_p$ are selected randomly.
				For every $j \in \{2,...,d\}$, $\mathcal{B}$ picks a random value $t_i' \in Z_p$ and outputs the keys 				
				\begin{equation*}
					\begin{split}
						(sk_{i,1}^{(2)'}= &sk_{i,1}^{(2)^{\omega_1}} \cdot g^{\omega_2} \cdot h_1^{t_i'}, sk_{i,2}^{(2)'}
						= sk_{i,2}^{(2)^{\omega_1}}\cdot \hat{g}^{t_i'}), \\
						& k_{i,j}^{(2)'}=k_{j}^{(2)^{\omega_1}} \cdot (h_1^{-\theta_{i,j}/\theta_{i,1}}\cdot h_j)^{t_i'}.
					\end{split}				
				\end{equation*}			
			}
			
			\item{
				
				If $\bar{x_i} \in S'$ ($x_i \notin S^*$), we have $<\overrightarrow{\mathcal{L}_i},\overrightarrow{\pi}>=0$. 
				Hence, $\overrightarrow{\mathcal{L}_i} \cdot \overrightarrow{\zeta}=\overrightarrow{\mathcal{L}_i}\cdot \overrightarrow{\zeta}$.
				$\mathcal{B}$ selects $r_i \in Z_p$ randomly.
				$\mathcal{B}$ outputs the keys
				$$
				(sk_{i,1}^{(2)'}=g^{\overrightarrow{\mathcal{L}_i} \cdot \overrightarrow{v}}\cdot h_1^{rt_i} ,sk_{i,2}^{(2)'},k_2^{(2)'},\ldots,k_d^{(2)'}).
				$$
				
			}
		\end{itemize}
		
		Then, $\mathcal{B}$ sends the key tuples 
		$(sk_{i,1}^{(1)'},sk_{i,2}^{(1)},\ldots,k_d^{(1)'})$
		and $(sk_{i,1}^{(2)},sk_{i,2}^{(2)},k_2^{(2)},\ldots,k_d^{(2)})$ to $\mathcal{A}$.
		
		%%%%函数私钥的询问

		% $\mathcal{H}_{ct_k}$ denotes a query corresponding to the partial ciphertext with index $k$, i.e., $ct_{w,\eta}^{(1)}$ and $e(H(M_{k,\eta,\mu}^*\cdot Tag^*),\hat{g}^r)^{b_k}$.
		% Let $M_{k}=\{m_{k,1},...,m_{k,l}\}$ be the corresponding message set and corresponding $Tag \neq Tag^*$.
		% For every $\eta \in [l]$, $\mathcal{B}$ retrieves $(M_{k,\eta}\cdot Tag,u_{k,\eta}',g^{u_{k,\eta}'})$ from $H$-list
		% and computes $ct_{k,\eta}^{(1)}=(g^{b_k})^{u_{k,\eta}'}$ and $e(H(M_{k,\eta,\mu}^*\cdot Tag^*),\hat{g}^r)^{b_k}=e(g^{u_{k}'},\hat{g})^{b_k}$.
		
		$\mathcal{A}$ issues a query for the function $f=(w,v)\in \mathcal{Q}$. 
		$\mathcal{B}$ selects a random value $\tilde{b}_w \in Z_p$ and 
		calculates $sk_{f,1}=\hat{g}^{b_wc_{w,v}}$, $sk_{f,2}=\hat{g}^{b_vc_{w,v}}$ and 
		$sk_{f,3}=(\hat{g}^{1/(b_w+b_v)})^{\tilde{b}_w {r}}$ from the given assumption tuple, and sends them to $\mathcal{A}$.
		
		%%%挑战阶段
		\noindent
		\textbf{Challenge.}
		%%%%ct_{k,\eta}^{(1)}模拟
		For every $k\in[N]$ and $\eta \in [l]$, $\mathcal{B}$ generates the ciphertext $ct_{k,\eta}^{(1)}$ as follows.
		
		%%%%
		(1) In the case $k<\rho$, there exist the following three conditions. 
		
		\begin{itemize}
			\item {If $(M_{k,\eta,\mu}^* \in E_k^*) \wedge (M_{k,\eta,\mu}^* = M_{\rho,\eta,\mu})$,
				$\mathcal{B}$ picks the tuple $(M_{k,\eta,\mu}^*\cdot Tag^*,-,g^a)$ from the $H$-list 
				and generates $ct_{k,\eta}^{(1)}=g^{ab_i}$ and $e(H(M_{k,\eta,\mu}^*\cdot Tag^*),\hat{g}^r)^{b_k}=e(g^a,\hat{g}^r)^{b_k}$.}
			\item {If $(M_{k,\eta,\mu}^* \in E_k^*) \wedge (M_{k,\eta,\mu}^* \neq M_{\rho,\eta,\mu})$,
				$\mathcal{B}$ picks the tuple $(M_{k,\eta,\mu}^*\cdot Tag^*,u_{k,\eta}',g^{u_{k,\eta}'})$ from the $H$-list 
				and generates $ct_{k,\eta}^{(1)}=(g^{b_i})^{u_{k,\eta}'}$ and $e(H(M_{k,\eta,\mu}^*\cdot Tag^*),\hat{g}^r)^{b_k}=e(g^{u_{k,\eta}'},\hat{g}^r)^{b_k}$.
			}
			\item {If $M_{k,\eta,\mu}^* \notin E_k^*$,
				the tuple $(M_{k,\eta,\mu}^*\cdot Tag^*,u_{k,\eta}',g^{u_{k,\eta}'})$ is picked from the $H$-list 
				and sets the random value $ct_{k,\eta}^{(1)} \in G$ and $e(H(M_{k,\eta,\mu}^*\cdot Tag^*),\hat{g}^r)^{b_k}=e(g^{u_{k,\eta}'},\hat{g}^r)^{b_k}$.
			}
		\end{itemize}
		
		(2) In the case $k=\rho$, there exist the following four conditions.
		
		\begin{itemize}
			\item{If $(\eta < \delta) \wedge (M_{\rho,\eta,\mu}^* \in E_{\rho}^*)$,
				$\mathcal{B}$ picks the tuple $(M_{\rho,\eta,\mu}^*\cdot Tag^*,u_{\rho,\eta}',g^{u_{\rho,\eta}'})$ from the $H$-list 
				and generates $ct_{\rho,\eta}^{(1)} = (g^{b_\rho})^{u_{\rho,\eta}'}$ and $e(H(M_{\rho,\eta,\mu}^*\cdot Tag^*),\hat{g}^r)^{b_\rho}=e(g^{u_{\rho,\eta}'},\hat{g}^r)^{b_\rho}$ since $M_{\rho,\eta,\mu}^* \neq M_{\rho,\delta,\mu}^*$.
			}
			\item {If $(\eta < \delta) \wedge (M_{\rho,\eta,\mu}^* \notin E_{\rho}^*)$,
				$\mathcal{B}$ picks the tuple $(M_{\rho,\eta,\mu}^*\cdot Tag^*,u_{\rho,\eta}',g^{u_{\rho,\eta}'})$ from the $H$-list 
				and sets the random value $ct_{\rho,\eta}^{(1)} \in G$ and $e(H(M_{\rho,\eta,\mu}^*\cdot Tag^*),\hat{g}^r)^{b_\rho}=e(g^{u_{\rho,\eta}'},\hat{g}^r)^{b_\rho}$.	
			}
			\item {If $\eta = \delta$, 
				$\mathcal{B}$ sets the $ct_{\rho,\eta}^{(1)} = Z$ and $e(H(M_{\rho,\eta,\mu}^*\cdot Tag^*),\hat{g}^r)^{b_\rho}=e(g^{a},\hat{g}^r)^{b_\rho}$ since assuming $M_{\rho,\delta,\mu}^* \notin E_{\rho}^*$.
			}
			\item {If $\eta > \delta$,
				$\mathcal{B}$ picks the tuple $(M_{\rho,\eta,\mu}^*\cdot Tag^*,u_{\rho,\eta}',g^{u_{\rho,\eta}'})$ from the $H$-list 
				and generates $ct_{\rho,\eta}^{(1)} = (g^{b_\rho})^{u_{\rho,\eta}'}$ and $e(H(M_{\rho,\eta,\mu}^*\cdot Tag^*),\hat{g}^r)^{b_\rho}=e(g^{u_{\rho,\eta}'},\hat{g}^r)^{b_\rho}$.
			}
		\end{itemize}
		
		(3) In the case $k>\rho$, there exist the following two conditions.
		
		\begin{itemize}
			\item {If $M_{k,\eta,\mu}^* = M_{\rho,\delta,\mu}^*$
				$\mathcal{B}$ picks the tuple $(M_{k,\eta,\mu}^*\cdot Tag^*,-,g^a)$ from the $H$-list 
				and generates $ct_{k,\eta}^{(1)}=g^{ab_i}$ and $e(H(M_{k,\eta,\mu}^*\cdot Tag^*),\hat{g}^r)^{b_k}=e(g^{a},\hat{g}^r)^{b_k}$.
			}
			\item {If $M_{k,\eta,\mu}^* \neq M_{\rho,\delta,\mu}^*$
				$\mathcal{B}$ picks the tuple $(M_{k,\eta,\mu}^*\cdot Tag^*,u_{k,\eta}',g^{u_{k,\eta}'})$ from the $H$-list 
				and generates $ct_{k,\eta}^{(1)}=(g^{b_i})^{u_{k,\eta}'}$ and $e(H(M_{k,\eta,\mu}^*\cdot Tag^*),\hat{g}^r)^{b_k}=e(g^{u_{k,\eta}'},\hat{g}^r)^{b_k}$.
			}
		\end{itemize}
		
		$\mathcal{B}$ has $u_0 \cdot g^{<\overrightarrow{\beta'},\overrightarrow{Y}>}
		=g^{\vartheta+<\overrightarrow{\delta},\overrightarrow{Y}>} \text{ and }
		g^{<\overrightarrow{\alpha},\overrightarrow{Y}>}=g^{<\overrightarrow{\theta},\overrightarrow{Y}>}$.
		$\mathcal{B}$ flips a coin, and obtains $\mu \in \{0,1\}$.  The challenging ciphertexts of $\{M_{1,\mu}^*,\ldots,M_{N,\mu}^*\}$ are computed as follows.
		\begin{equation*}
			\begin{split}
				&ct_{k,\eta}^{(0)}= M_{k,\eta,\mu}^* \cdot T \cdot e(g^{a},\hat{g}^r)^{\tilde{b_w}}, \\
				&ct_{k,\eta}^{(1)} = Z,
				\quad ct_{1,k}=\hat{h}, 
				\quad ct_{2,k}=h^{\delta_0 r+<\overrightarrow{\delta},\overrightarrow{Y}>},\\&ct_{3,k}=h^{r\overrightarrow{\theta_1} y_1} \cdot h^{y_2 \overrightarrow{\theta_2}} \ldots h^{y_d\cdot \overrightarrow{\theta_d}}
			\end{split}					
		\end{equation*}
		
		If $\psi =0$, then $T=e(g,\hat{h})^{z_{d+1}}$, $Z=g^{ab_{\rho}}$.
		The challenged ciphertext $CT^*$ is 
		\begin{equation*}
			\begin{split}
				&ct_{k,\eta}^{(0)}= M_{k,\eta,\mu}^* \cdot e(g,\hat{g})^{\gamma^{n+1}\vartheta} \cdot e(H(M_{k,\eta,\mu}^*\cdot Tag^*),\hat{g}^r)^{\tilde{b_w}}, \\
				&ct_{k,\eta}^{(1)} = g^{ab_{\rho}},
				\quad ct_{1,k}=\hat{g}^{\vartheta},
				\quad ct_{2,k}=(g^{\delta_0 r}\prod_{i=1}^{d}g^{\delta_i y_i})^{\vartheta}, \\
				&ct_{3,k}=(g^{r\overrightarrow{\theta_1} y_1} \prod_{i=2}^{d} g^{\overrightarrow{\theta_i} y_i})^{\vartheta}.
			\end{split}			
		\end{equation*}
		This is a valid ciphertext for the message $\{M_{1,\mu}^*,\ldots,M_{N,\mu}^*\}$ under attribute sets $S^*$.
		
		Otherwise, if $\psi =1$, $T \in G_T$ and $Z \in G$ are randomly chosen and $CT^*$ hides the message $\{M_{1,0}^*,...,M_{N,0}^*\}$ and $\{M_{1,1}^*,...,M_{N,1}^*\}$.
		
		%%%询问阶段二
		\noindent
		\textbf{Phase-2.} $\mathcal{B}$ executes repeatedly as it did in \textbf{Phase-1}.
		
		%%%猜测阶段
		\noindent
		\textbf{Guess.}
		$\mathcal{A}$ submits a guess $\mu'$ on $\mu$.
		$\mathcal{B}$ works as follows.
		
		(1) If $\mu'=\mu$, $\mathcal{B}$ outputs $\psi'=0$.
		
		(2) Otherwise, $\mu'\neq \mu$, $\mathcal{B}$ outputs $\psi'=1$.
		
		If $\psi=0$, $CT^{*}$ is a correct ciphertext, hence $\mathcal{A}$ outputs $\mu' = \mu$ with probability $\frac{1}{2} + \epsilon(\lambda)$.
		When $\mu'=\mu$, $\mathcal{B}$ outputs $\psi'=0$.
		We have $Pr[\psi = \psi'|\psi =0]$=$\frac{1}{2} + \epsilon(\lambda)$. 
		
		If $\psi=1$, $CT^{*}$ is the one time pad of $\{M_{1,0}^*,...,M_{N,0}^*\}$ and $\{M_{1,1}^*,...,M_{N,1}^*\}$, hence $\mathcal{A}$ outputs $\mu' \neq \mu$ with probability $\frac{1}{2}$.
		When $\mu' \neq \mu$, $\mathcal{B}$ outputs $\psi'=1$.
		We have $Pr[\psi = \psi'|\psi =1]$=$\frac{1}{2}$.		
		Hence, the advantage that $\mathcal{B}$ can break the variant of the $q$-DBDHE assumption and the assumption in \cite{Lee2023} is 
		\begin{align*}
			& \left | \frac{1}{2}\times Pr[\psi = \psi'|\psi =0] - \frac{1}{2} \times Pr[\psi = \psi'|\psi =1] \right | \\
			& \geqslant \frac{1}{2} Pr[\psi = \psi'|\psi =0] - \frac{1}{2} Pr[\psi = \psi'|\psi =1] \\
			&= \frac{1}{2} \times (\frac{1}{2} + \epsilon(\lambda)) - \frac{1}{2} \times \frac{1}{2} \\
			&= \frac{\epsilon(\lambda)}{2}
		\end{align*}
		
		\end{proof}
		
\section{Efficiency Analysis}

	The MCFE-SI-NAS scheme is implemented on the Lenovo Y9000K laptop with an Intel i7-11800H CPU and 32M RAM.
	For implementing the bilinear map, we utilize java pairing-based cryptography (JPBC) library \cite{JPBC}
	which is an open source library written in Java and supports many types of elliptic curves and other algebraic curves.
	We select the type F curve $y^2 = x^3 + b$ for supporting the pairing operations which 
	is a pairing-friendly curve and is able to support Type-III pairing.
	We implement the each algorithm of our MCFE-SI-NAS scheme and show the computation costs in the Fig 6.
	
	Set size of attribute set in ciphertexts is $d=10$.
	Let $N$ be the number of the clients and $l$ stand for the size of the plaintext set.
	We consider the following three cases during implementing the presented scheme.
	Case-I. $N=5$, $l=5$;
	Case-II. $N=10$, $l=5$;
	Case-III. $N=10$, $l=10$.
	Each algorithm runs five times, and the average value is taken as experimental result. 
	
	In $Setup$ algorithm, TA is responsible for generating  $csk_k \in Z_p$ for each client, which does not  involves pairing or exponential calculation and is irrelvance to plaintext size. Implementation result shows that $Setup$ algorithm takes about 558.6 ms, 559.6 ms and 560.4 ms in Case-I, Case-II and Case-III, respectively.
	
	The $KeyGen$ algorithm is executed by the TA for calculating the decryption keys for aggregator, and index function in decryption key is denoted by $f=(w,v)$, which is a set of fixed size.
	$KeyGen$ algorithm takes about 348.4 ms, 346.6 ms and 362.6 ms in three cases, respectively.
	
	In $Enc$ algorithm, each client encrypts independently their plaintext sets.
	It takes about 3256.8 ms, 6309.8 ms and 12530.6 ms in Case-I, Case-II and Case-III, respectively.
	The computation costs of the $Enc$ algorithm grows linearly with $l$ and $N$.
	
	After receiving the ciphertext sets of a pair of clients, aggregator executes the $Dec$ algorithm for 
	obtaining the plaintext intersection of this pair of clients.
	The computation costs of the $Dec$ algorithm are about 2090.8 ms in Case-I, 2144.4 ms in   Case-II, and 2802.6 ms in Case-III, which is linear with the size of the plaintext set.
	
	\begin{figure*}
		\centering
		% 设置阶段时间代价图
		\begin{subfigure}[b]{0.45\textwidth}
			\centering
			\includegraphics[width=\textwidth]{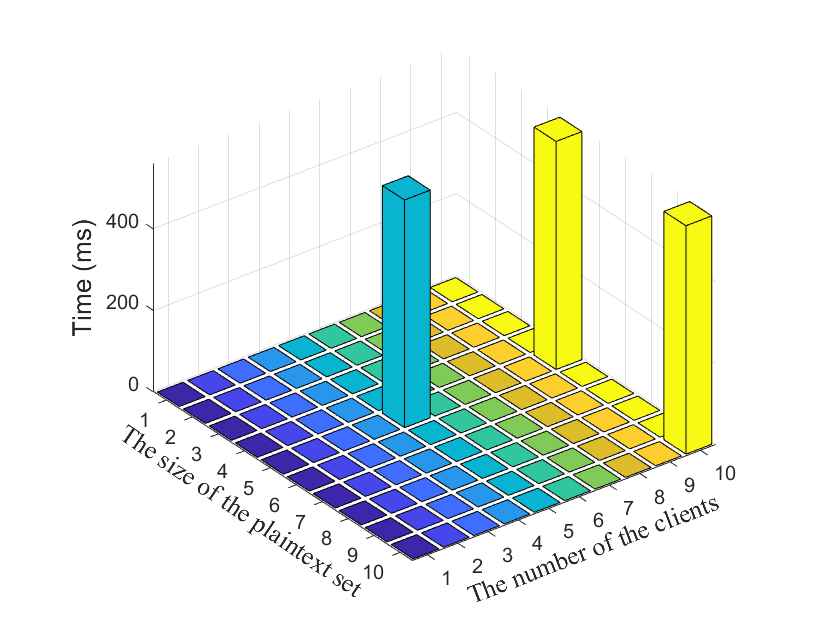}
			\caption{Setup algorithm}
			\label{time_of_setup}
		\end{subfigure}
		\hfill
		\begin{subfigure}[b]{0.45\textwidth}
			\centering
			\includegraphics[width=\textwidth]{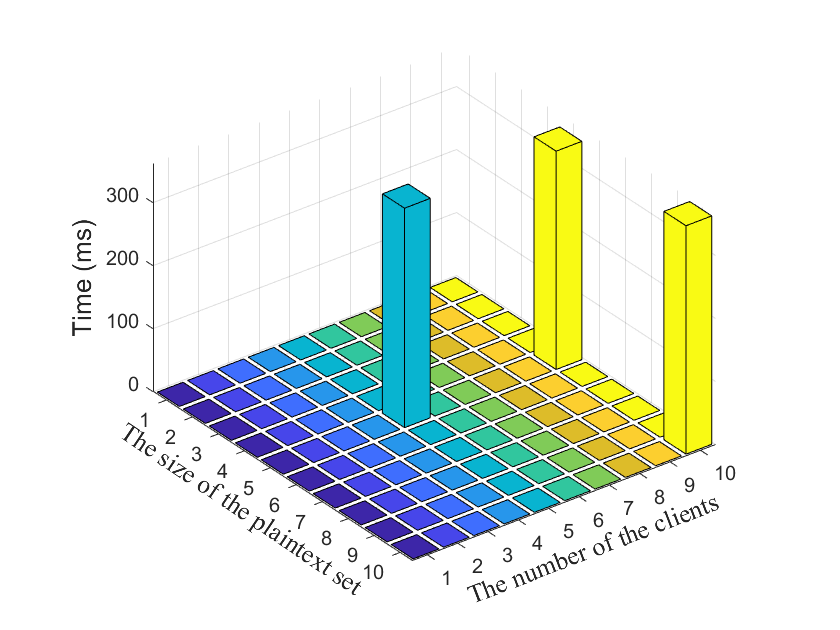}
			\caption{KeyGen algorithm}  % 不要注释掉 caption
			\label{time_of_keygen}
		\end{subfigure}
		\hfill
		\begin{subfigure}[b]{0.45\textwidth}
			\centering
			\includegraphics[width=\textwidth]{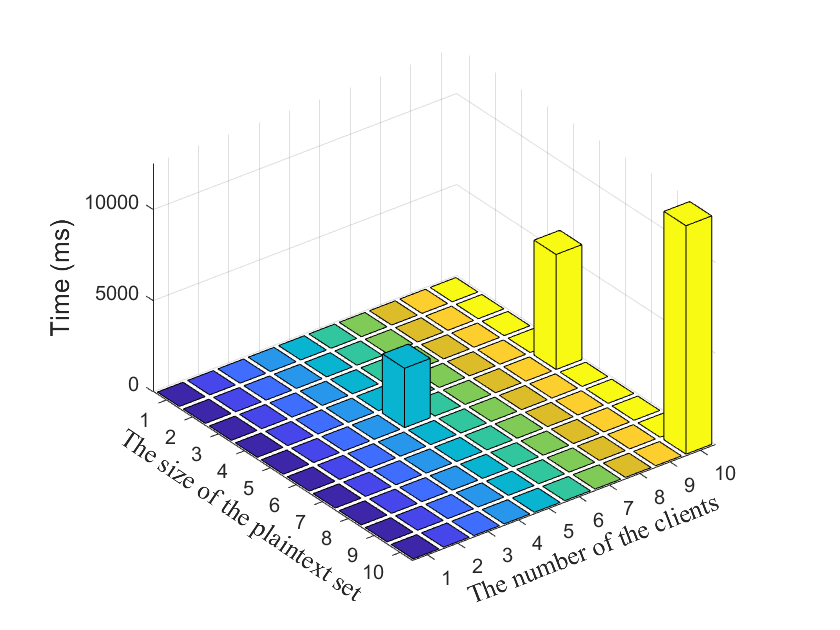}
			\caption{Enc algorithm}  % 添加 caption
			\label{time_of_enc}
		\end{subfigure}
		\hfill
		\begin{subfigure}[b]{0.45\textwidth}
			\centering
			\includegraphics[width=\textwidth]{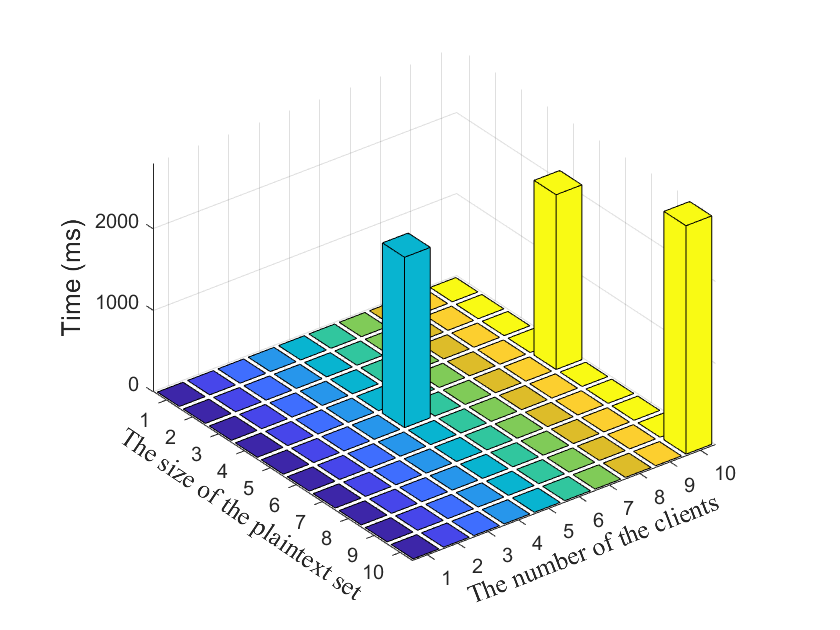}
			\caption{Dec algorithm}
			\label{time_of_dec}
		\end{subfigure}
		
		\caption{The computation cost of our MCFE-SI-NAS scheme}
		\label{fig:time_algorithms}
	\end{figure*}	

\section{Conclusion}

	In this paper, we presented a MCFE-SI-NAS scheme that supports the non-monotonic access structures and set intersection operations.
	The proposed scheme allows each client co-exists and encrypts independently, which is suitable for FL environment. 
	Our MCFE-SI-NAS scheme allows the aggregator to aggregate ciphertexts, and only learn the intersection of private sets held by the specified clients without revealing anything else about plaintexts.
	The designed non-monotonic access structures support any access formula
	including  "AND" gate, "OR" gate, "NOT" gate and threshold policy, which is more flexible than monotonic access policy.
	We first gave the formal definition and security model of the MCFE-SI-NAS scheme and described a concrete construction.
	We proved the security of the proposed scheme in the random oracle model and also provide performance analysis.
	
\printcredits

\section*{Acknowledgement}
	 This work was supported by the National Natural Science Foundation of China (Grant No. 62372103, 61972190), the Natural Science Foundation of Jiangsu Province (Grant No. BK20231149), the Jiangsu Provincial Scientific Research Center of Applied Mathematics (Grant No. BK202330\\02),  and the Start-up Research Fund of Southeast University (Grant No. RF1028623300).

%% Loading bibliography style file
%\bibliographystyle{model1-num-names}
\bibliographystyle{cas-model2-names}

% Loading bibliography database
\bibliography{cas-refs}

\begin{thebibliography}{37}
\expandafter\ifx\csname natexlab\endcsname\relax\def\natexlab#1{#1}\fi
\providecommand{\url}[1]{\texttt{#1}}
\providecommand{\href}[2]{#2}
\providecommand{\path}[1]{#1}
\providecommand{\DOIprefix}{doi:}
\providecommand{\ArXivprefix}{arXiv:}
\providecommand{\URLprefix}{URL: }
\providecommand{\Pubmedprefix}{pmid:}
\providecommand{\doi}[1]{\href{http://dx.doi.org/#1}{\path{#1}}}
\providecommand{\Pubmed}[1]{\href{pmid:#1}{\path{#1}}}
\providecommand{\bibinfo}[2]{#2}
\ifx\xfnm\relax \def\xfnm[#1]{\unskip,\space#1}\fi
%Type = Inproceedings
\bibitem[{Abdalla et~al.(2020)Abdalla, Catalano, Gay and Ursu}]{Abdalla2020}
\bibinfo{author}{Abdalla, M.}, \bibinfo{author}{Catalano, D.},
  \bibinfo{author}{Gay, R.}, \bibinfo{author}{Ursu, B.}, \bibinfo{year}{2020}.
\newblock \bibinfo{title}{Inner-product functional encryption with fine-grained
  access control}, in: \bibinfo{editor}{Moriai, S.}, \bibinfo{editor}{Wang, H.}
  (Eds.), \bibinfo{booktitle}{ASIACRYPT 2020}, \bibinfo{publisher}{Springer
  International Publishing}, \bibinfo{address}{Daejeon, Korea}. pp.
  \bibinfo{pages}{467--497}.
%Type = Inproceedings
\bibitem[{Agrawal et~al.(2021)Agrawal, Goyal and
  Tomida}]{Agrawal2021MultiParty}
\bibinfo{author}{Agrawal, S.}, \bibinfo{author}{Goyal, R.},
  \bibinfo{author}{Tomida, J.}, \bibinfo{year}{2021}.
\newblock \bibinfo{title}{Multi-party functional encryption}, in:
  \bibinfo{editor}{Nissim, K.}, \bibinfo{editor}{Waters, B.} (Eds.),
  \bibinfo{booktitle}{TCC 2021}, \bibinfo{publisher}{Springer, Cham},
  \bibinfo{address}{Raleigh, NC, USA}. pp. \bibinfo{pages}{224--255}.
%Type = Misc
\bibitem[{Angelou et~al.(2020)Angelou, Benaissa, Cebere, Clark, Hall, Hoeh,
  Liu, Papadopoulos, Roehm, Sandmann, Schoppmann and Titcombe}]{Angelou2020}
\bibinfo{author}{Angelou, N.}, \bibinfo{author}{Benaissa, A.},
  \bibinfo{author}{Cebere, B.}, \bibinfo{author}{Clark, W.},
  \bibinfo{author}{Hall, A.J.}, \bibinfo{author}{Hoeh, M.A.},
  \bibinfo{author}{Liu, D.}, \bibinfo{author}{Papadopoulos, P.},
  \bibinfo{author}{Roehm, R.}, \bibinfo{author}{Sandmann, R.},
  \bibinfo{author}{Schoppmann, P.}, \bibinfo{author}{Titcombe, T.},
  \bibinfo{year}{2020}.
\newblock \bibinfo{title}{Asymmetric private set intersection with applications
  to contact tracing and private vertical federated machine learning}.
\newblock \URLprefix \url{https://arxiv.org/abs/2011.09350},
  \href{http://arxiv.org/abs/2011.09350}{\tt arXiv:2011.09350}.
%Type = Inproceedings
\bibitem[{Bagdasaryan et~al.(2020)Bagdasaryan, Veit, Hua, Estrin and
  Shmatikov}]{Bagdasaryan2020}
\bibinfo{author}{Bagdasaryan, E.}, \bibinfo{author}{Veit, A.},
  \bibinfo{author}{Hua, Y.}, \bibinfo{author}{Estrin, D.},
  \bibinfo{author}{Shmatikov, V.}, \bibinfo{year}{2020}.
\newblock \bibinfo{title}{How to backdoor federated learning}, in:
  \bibinfo{editor}{Chiappa, S.}, \bibinfo{editor}{Calandra, R.} (Eds.),
  \bibinfo{booktitle}{AISTATS 2020}, \bibinfo{publisher}{PMLR}. pp.
  \bibinfo{pages}{2938--2948}.
%Type = Article
\bibitem[{Boneh and Franklin(2003)}]{BonehIBE}
\bibinfo{author}{Boneh, D.}, \bibinfo{author}{Franklin, M.},
  \bibinfo{year}{2003}.
\newblock \bibinfo{title}{Identity-based encryption from the weil pairing}.
\newblock \bibinfo{journal}{SIAM Journal on Computing} \bibinfo{volume}{32},
  \bibinfo{pages}{586–615}.
%Type = Inproceedings
\bibitem[{Boneh et~al.(2011)Boneh, Sahai and Waters}]{Boneh2011}
\bibinfo{author}{Boneh, D.}, \bibinfo{author}{Sahai, A.},
  \bibinfo{author}{Waters, B.}, \bibinfo{year}{2011}.
\newblock \bibinfo{title}{Functional encryption: Definitions and challenges},
  in: \bibinfo{editor}{Ishai, Y.} (Ed.), \bibinfo{booktitle}{TCC 2011},
  \bibinfo{publisher}{Springer Berlin Heidelberg}, \bibinfo{address}{Berlin,
  Heidelberg}. pp. \bibinfo{pages}{253--273}.
%Type = Article
\bibitem[{Chang et~al.(2023)Chang, Zhang, Gong and Qian}]{Chang2023}
\bibinfo{author}{Chang, Y.}, \bibinfo{author}{Zhang, K.},
  \bibinfo{author}{Gong, J.}, \bibinfo{author}{Qian, H.}, \bibinfo{year}{2023}.
\newblock \bibinfo{title}{Privacy-preserving federated learning via functional
  encryption, revisited}.
\newblock \bibinfo{journal}{IEEE Transactions on Information Forensics and
  Security} \bibinfo{volume}{18}, \bibinfo{pages}{1855--1869}.
%Type = Article
\bibitem[{Chen et~al.(2024)Chen, Xiao, Yu and Zhang}]{Chen2024}
\bibinfo{author}{Chen, L.}, \bibinfo{author}{Xiao, D.}, \bibinfo{author}{Yu,
  Z.}, \bibinfo{author}{Zhang, M.}, \bibinfo{year}{2024}.
\newblock \bibinfo{title}{Secure and efficient federated learning via novel
  multi-party computation and compressed sensing}.
\newblock \bibinfo{journal}{Information Sciences} \bibinfo{volume}{667}.
\newblock \URLprefix
  \url{https://www.sciencedirect.com/science/article/pii/S0020025524003943},
  \DOIprefix\doi{https://doi.org/10.1016/j.ins.2024.120481}.
%Type = Inproceedings
\bibitem[{Chotard et~al.(2018)Chotard, Dufour~Sans, Gay, Phan and
  Pointcheval}]{Chotard2018}
\bibinfo{author}{Chotard, J.}, \bibinfo{author}{Dufour~Sans, E.},
  \bibinfo{author}{Gay, R.}, \bibinfo{author}{Phan, D.H.},
  \bibinfo{author}{Pointcheval, D.}, \bibinfo{year}{2018}.
\newblock \bibinfo{title}{Decentralized multi-client functional encryption for
  inner product}, in: \bibinfo{editor}{Peyrin, T.}, \bibinfo{editor}{Galbraith,
  S.} (Eds.), \bibinfo{booktitle}{ASIACRYPT 2018}, \bibinfo{publisher}{Springer
  International Publishing}, \bibinfo{address}{Cham}. pp.
  \bibinfo{pages}{703--732}.
%Type = Inproceedings
\bibitem[{Datta and Pal(2023)}]{Datta2023Decentralized}
\bibinfo{author}{Datta, P.}, \bibinfo{author}{Pal, T.}, \bibinfo{year}{2023}.
\newblock \bibinfo{title}{Decentralized multi-authority attribute-based
  inner-product fe: Large universe and unbounded}, in:
  \bibinfo{editor}{Boldyreva, A.}, \bibinfo{editor}{Kolesnikov, V.} (Eds.),
  \bibinfo{booktitle}{PKC 2023}, \bibinfo{publisher}{Springer, Cham},
  \bibinfo{address}{Atlanta, GA, USA}. pp. \bibinfo{pages}{587--621}.
%Type = Inproceedings
\bibitem[{De~Caro and Iovino(2011)}]{JPBC}
\bibinfo{author}{De~Caro, A.}, \bibinfo{author}{Iovino, V.},
  \bibinfo{year}{2011}.
\newblock \bibinfo{title}{jpbc: Java pairing based cryptography}, in:
  \bibinfo{booktitle}{ISCC 2011}, \bibinfo{address}{Kerkyra, Greece}. pp.
  \bibinfo{pages}{850--855}.
%Type = Inproceedings
\bibitem[{Dowerah et~al.(2024)Dowerah, Dutta, Hartmann, Mitrokotsa, Mukherjee
  and Pal}]{dowerah2024sacfe}
\bibinfo{author}{Dowerah, U.}, \bibinfo{author}{Dutta, S.},
  \bibinfo{author}{Hartmann, F.}, \bibinfo{author}{Mitrokotsa, A.},
  \bibinfo{author}{Mukherjee, S.}, \bibinfo{author}{Pal, T.},
  \bibinfo{year}{2024}.
\newblock \bibinfo{title}{Sacfe: Secure access control in functional encryption
  with unbounded data}, in: \bibinfo{booktitle}{EuroS$\&$P 2024},
  \bibinfo{organization}{IEEE}, \bibinfo{address}{Vienna, Austria}. pp.
  \bibinfo{pages}{860--882}.
%Type = Inproceedings
\bibitem[{Feng et~al.(2024)Feng, Shen, Li, Fang and Wu}]{Feng2024}
\bibinfo{author}{Feng, X.}, \bibinfo{author}{Shen, Q.}, \bibinfo{author}{Li,
  C.}, \bibinfo{author}{Fang, Y.}, \bibinfo{author}{Wu, Z.},
  \bibinfo{year}{2024}.
\newblock \bibinfo{title}{Privacy preserving federated learning from
  multi-input functional proxy re-encryption}, in: \bibinfo{booktitle}{ICASSP
  2024}, \bibinfo{publisher}{IEEE}, \bibinfo{address}{Seoul, Korea}. pp.
  \bibinfo{pages}{6955--6959}.
%Type = Inproceedings
\bibitem[{Goldwasser et~al.(2014)Goldwasser, Gordon, Goyal, Jain, Katz, Liu,
  Sahai, Shi and Zhou}]{Goldwasser2014}
\bibinfo{author}{Goldwasser, S.}, \bibinfo{author}{Gordon, S.D.},
  \bibinfo{author}{Goyal, V.}, \bibinfo{author}{Jain, A.},
  \bibinfo{author}{Katz, J.}, \bibinfo{author}{Liu, F.H.},
  \bibinfo{author}{Sahai, A.}, \bibinfo{author}{Shi, E.},
  \bibinfo{author}{Zhou, H.S.}, \bibinfo{year}{2014}.
\newblock \bibinfo{title}{Multi-input functional encryption}, in:
  \bibinfo{editor}{Nguyen, P.Q.}, \bibinfo{editor}{Oswald, E.} (Eds.),
  \bibinfo{booktitle}{EUROCRYPT 2014}, \bibinfo{publisher}{Springer Berlin
  Heidelberg}, \bibinfo{address}{Berlin, Heidelberg}. pp.
  \bibinfo{pages}{578--602}.
%Type = Article
\bibitem[{Gong et~al.(2024)Gong, Zhang, Gao, Qin, Wu, Wang and
  Zhang}]{Gong2024}
\bibinfo{author}{Gong, M.}, \bibinfo{author}{Zhang, Y.}, \bibinfo{author}{Gao,
  Y.}, \bibinfo{author}{Qin, A.K.}, \bibinfo{author}{Wu, Y.},
  \bibinfo{author}{Wang, S.}, \bibinfo{author}{Zhang, Y.},
  \bibinfo{year}{2024}.
\newblock \bibinfo{title}{A multi-modal vertical federated learning framework
  based on homomorphic encryption}.
\newblock \bibinfo{journal}{IEEE Transactions on Information Forensics and
  Security} \bibinfo{volume}{19}, \bibinfo{pages}{1826--1839}.
%Type = Inproceedings
\bibitem[{Goyal et~al.(2006)Goyal, Pandey, Sahai and Waters}]{Goyal2006}
\bibinfo{author}{Goyal, V.}, \bibinfo{author}{Pandey, O.},
  \bibinfo{author}{Sahai, A.}, \bibinfo{author}{Waters, B.},
  \bibinfo{year}{2006}.
\newblock \bibinfo{title}{Attribute-based encryption for fine-grained access
  control of encrypted data}, in: \bibinfo{booktitle}{CCS 2006},
  \bibinfo{publisher}{Association for Computing Machinery},
  \bibinfo{address}{New York, NY, USA}. pp. \bibinfo{pages}{89--98}.
%Type = Article
\bibitem[{He et~al.(2022)He, Tan, Ni, Yang and Deng}]{He2022}
\bibinfo{author}{He, Y.}, \bibinfo{author}{Tan, X.}, \bibinfo{author}{Ni, J.},
  \bibinfo{author}{Yang, L.T.}, \bibinfo{author}{Deng, X.},
  \bibinfo{year}{2022}.
\newblock \bibinfo{title}{Differentially private set intersection for
  asymmetrical id alignment}.
\newblock \bibinfo{journal}{IEEE Transactions on Information Forensics and
  Security} \bibinfo{volume}{17}, \bibinfo{pages}{3479--3494}.
%Type = Article
\bibitem[{Kairouz and et~al.(2021)}]{Kairouz2021}
\bibinfo{author}{Kairouz, P.}, \bibinfo{author}{et~al., H.B.M.},
  \bibinfo{year}{2021}.
\newblock \bibinfo{title}{Advances and open problems in federated learning}.
\newblock \bibinfo{journal}{Foundations and Trends® in Machine Learning}
  \bibinfo{volume}{14}, \bibinfo{pages}{1--210}.
%Type = Inproceedings
\bibitem[{van~de Kamp et~al.(2019)van~de Kamp, Stritzl, Jonker and
  Peter}]{Kamp2019Two}
\bibinfo{author}{van~de Kamp, T.}, \bibinfo{author}{Stritzl, D.},
  \bibinfo{author}{Jonker, W.}, \bibinfo{author}{Peter, A.},
  \bibinfo{year}{2019}.
\newblock \bibinfo{title}{Two-client and multi-client functional encryption for
  set intersection}, in: \bibinfo{editor}{Jang-Jaccard, J.},
  \bibinfo{editor}{Guo, F.} (Eds.), \bibinfo{booktitle}{ACISP 2019},
  \bibinfo{publisher}{Springer, Cham}, \bibinfo{address}{Christchurch, New
  Zealand}. pp. \bibinfo{pages}{97--115}.
%Type = Article
\bibitem[{Lee(2023)}]{Lee2023}
\bibinfo{author}{Lee, K.}, \bibinfo{year}{2023}.
\newblock \bibinfo{title}{Decentralized multi-client functional encryption for
  set intersection with improved efficiency}.
\newblock \bibinfo{journal}{Designs, Codes and Cryptography}
  \bibinfo{volume}{91}, \bibinfo{pages}{1053--1093}.
%Type = Article
\bibitem[{Lee and Seo(2022)}]{Lee2022}
\bibinfo{author}{Lee, K.}, \bibinfo{author}{Seo, M.}, \bibinfo{year}{2022}.
\newblock \bibinfo{title}{Functional encryption for set intersection in the
  multi-client setting}.
\newblock \bibinfo{journal}{Designs, Codes and Cryptography}
  \bibinfo{volume}{90}, \bibinfo{pages}{17--47}.
%Type = Misc
\bibitem[{Li et~al.(2023)Li, Zhao and Li}]{Li2023}
\bibinfo{author}{Li, G.}, \bibinfo{author}{Zhao, Y.}, \bibinfo{author}{Li, Y.},
  \bibinfo{year}{2023}.
\newblock \bibinfo{title}{Catfl: Certificateless authentication-based
  trustworthy federated learning for 6g semantic communications}.
\newblock \URLprefix \url{https://arxiv.org/abs/2302.00271},
  \href{http://arxiv.org/abs/2302.00271}{\tt arXiv:2302.00271}.
%Type = Article
\bibitem[{Liu et~al.(2024)Liu, Zheng, Shi, Tong and Zhang}]{Liu2024}
\bibinfo{author}{Liu, F.}, \bibinfo{author}{Zheng, Z.}, \bibinfo{author}{Shi,
  Y.}, \bibinfo{author}{Tong, Y.}, \bibinfo{author}{Zhang, Y.},
  \bibinfo{year}{2024}.
\newblock \bibinfo{title}{A survey on federated learning: a perspective from
  multi-party computation}.
\newblock \bibinfo{journal}{Frontiers of Computer Science}
  \bibinfo{volume}{18}.
\newblock \DOIprefix\doi{https://doi.org/10.1007/s11704-023-3282-7}.
%Type = Inproceedings
\bibitem[{Liu et~al.(2023)Liu, Ma, Zhang, Liu and Pei}]{Liu2023PSI}
\bibinfo{author}{Liu, J.}, \bibinfo{author}{Ma, T.}, \bibinfo{author}{Zhang,
  H.}, \bibinfo{author}{Liu, W.}, \bibinfo{author}{Pei, Q.},
  \bibinfo{year}{2023}.
\newblock \bibinfo{title}{Efficient sample alignment with fast polynomial
  interpolation for vertical federated learning}, in:
  \bibinfo{booktitle}{GLOBECOM 2023}, \bibinfo{address}{Kuala Lumpur,
  Malaysia}. pp. \bibinfo{pages}{2596--2601}.
%Type = Inproceedings
\bibitem[{Lu and Ding(2020)}]{Liu2020PSI}
\bibinfo{author}{Lu, L.}, \bibinfo{author}{Ding, N.}, \bibinfo{year}{2020}.
\newblock \bibinfo{title}{Multi-party private set intersection in vertical
  federated learning}, in: \bibinfo{booktitle}{TrustCom 2020},
  \bibinfo{address}{Guangzhou, China}. pp. \bibinfo{pages}{707--714}.
%Type = Inproceedings
\bibitem[{Nguyen et~al.(2023)Nguyen, Phan and Pointcheval}]{Nguyen2023}
\bibinfo{author}{Nguyen, D.D.}, \bibinfo{author}{Phan, D.H.},
  \bibinfo{author}{Pointcheval, D.}, \bibinfo{year}{2023}.
\newblock \bibinfo{title}{Verifiable decentralized multi-client functional
  encryption for inner product}, in: \bibinfo{editor}{Guo, J.},
  \bibinfo{editor}{Steinfeld, R.} (Eds.), \bibinfo{booktitle}{ASIACRYPT 2023},
  \bibinfo{publisher}{Springer, Singapore}, \bibinfo{address}{Guangzhou,
  China}. pp. \bibinfo{pages}{33--65}.
%Type = Inproceedings
\bibitem[{Nguyen et~al.(2022)Nguyen, Phan and Pointcheval}]{Nguyen2022}
\bibinfo{author}{Nguyen, K.}, \bibinfo{author}{Phan, D.H.},
  \bibinfo{author}{Pointcheval, D.}, \bibinfo{year}{2022}.
\newblock \bibinfo{title}{Multi-client functional encryption with fine-grained
  access control}, in: \bibinfo{editor}{Agrawal, S.}, \bibinfo{editor}{Lin, D.}
  (Eds.), \bibinfo{booktitle}{ASIACRYPT 2022}, \bibinfo{publisher}{Springer
  Nature Switzerland}, \bibinfo{address}{Taipei, Taiwan}. pp.
  \bibinfo{pages}{95--125}.
%Type = Misc
\bibitem[{O'Neill(2010)}]{ONeill2010}
\bibinfo{author}{O'Neill, A.}, \bibinfo{year}{2010}.
\newblock \bibinfo{title}{Definitional issues in functional encryption}.
\newblock \bibinfo{howpublished}{Cryptology ePrint Archive, Paper 2010/556}.
\newblock \bibinfo{note}{\href{ https://eprint.iacr.org/2010/556
  }{https://eprint.iacr.org/2010/556}}.
%Type = Article
\bibitem[{Qian et~al.(2024)Qian, Li, Hao, Xu, Wang and Fang}]{Qian2024}
\bibinfo{author}{Qian, X.}, \bibinfo{author}{Li, H.}, \bibinfo{author}{Hao,
  M.}, \bibinfo{author}{Xu, G.}, \bibinfo{author}{Wang, H.},
  \bibinfo{author}{Fang, Y.}, \bibinfo{year}{2024}.
\newblock \bibinfo{title}{Decentralized multi-client functional encryption for
  inner product with applications to federated learning}.
\newblock \bibinfo{journal}{IEEE Transactions on Dependable and Secure
  Computing} , \bibinfo{pages}{1--16}.
%Type = Inproceedings
\bibitem[{Qian et~al.(2022)Qian, Li, Hao, Yuan, Zhang and Guo}]{Qian2022}
\bibinfo{author}{Qian, X.}, \bibinfo{author}{Li, H.}, \bibinfo{author}{Hao,
  M.}, \bibinfo{author}{Yuan, S.}, \bibinfo{author}{Zhang, X.},
  \bibinfo{author}{Guo, S.}, \bibinfo{year}{2022}.
\newblock \bibinfo{title}{Cryptofe: Practical and privacy-preserving federated
  learning via functional encryption}, in: \bibinfo{booktitle}{GLOBECOM 2022},
  \bibinfo{publisher}{IEEE}, \bibinfo{address}{Rio de Janeiro, Brazil}. pp.
  \bibinfo{pages}{2999--3004}.
%Type = Article
\bibitem[{Rafiee(2023)}]{Rafiee2023}
\bibinfo{author}{Rafiee, M.}, \bibinfo{year}{2023}.
\newblock \bibinfo{title}{Flexible multi-client functional encryption for set
  intersection}.
\newblock \bibinfo{journal}{The Journal of Supercomputing}
  \bibinfo{volume}{79}, \bibinfo{pages}{13744--13765}.
%Type = Misc
\bibitem[{Sahai and Waters(2008)}]{Sahai2008}
\bibinfo{author}{Sahai, A.}, \bibinfo{author}{Waters, B.},
  \bibinfo{year}{2008}.
\newblock \bibinfo{title}{Functional encryption: beyond public key
  cryptography. power point presentation, 2008}.
\newblock
  \bibinfo{note}{\href{https://csrc.nist.gov/csrc/media/events/applications-of-pairing-based-cryptography-identi/documents/waters_nist08-keynote.pdf}{https://csrc.nist.gov/csrc/media/events/applications-of-pairing-based-
  cryptography-identi/documents/waters\_nist08-keynote.pdf}}.
%Type = Inproceedings
\bibitem[{Shi and Vanjani(2023)}]{Shi2023}
\bibinfo{author}{Shi, E.}, \bibinfo{author}{Vanjani, N.}, \bibinfo{year}{2023}.
\newblock \bibinfo{title}{Multi-client inner product encryption:
  Function-hiding instantiations without random oracles}, in:
  \bibinfo{editor}{Boldyreva, A.}, \bibinfo{editor}{Kolesnikov, V.} (Eds.),
  \bibinfo{booktitle}{PKC 2023}, \bibinfo{publisher}{Springer Nature
  Switzerland}, \bibinfo{address}{Cham}. pp. \bibinfo{pages}{622--651}.
%Type = Article
\bibitem[{Song et~al.(2021)Song, Deng, Huang, Peng, Tang and Wang}]{SONG2021}
\bibinfo{author}{Song, G.}, \bibinfo{author}{Deng, Y.}, \bibinfo{author}{Huang,
  Q.}, \bibinfo{author}{Peng, C.}, \bibinfo{author}{Tang, C.},
  \bibinfo{author}{Wang, X.}, \bibinfo{year}{2021}.
\newblock \bibinfo{title}{Hierarchical identity-based inner product functional
  encryption}.
\newblock \bibinfo{journal}{Information Sciences} \bibinfo{volume}{573},
  \bibinfo{pages}{332--344}.
%Type = Inproceedings
\bibitem[{Yan et~al.(2024)Yan, Li, Chen, Wang, Hong, He and Wang}]{Yan2024}
\bibinfo{author}{Yan, N.}, \bibinfo{author}{Li, Y.}, \bibinfo{author}{Chen,
  J.}, \bibinfo{author}{Wang, X.}, \bibinfo{author}{Hong, J.},
  \bibinfo{author}{He, K.}, \bibinfo{author}{Wang, W.}, \bibinfo{year}{2024}.
\newblock \bibinfo{title}{Efficient and straggler-resistant homomorphic
  encryption for heterogeneous federated learning}, in:
  \bibinfo{booktitle}{IEEE INFOCOM 2024}, \bibinfo{publisher}{IEEE},
  \bibinfo{address}{Vancouver, BC, Canada}. pp. \bibinfo{pages}{791--800}.
%Type = Article
\bibitem[{Yang et~al.(2019)Yang, Liu, Chen and Tong}]{Yang2019PSI}
\bibinfo{author}{Yang, Q.}, \bibinfo{author}{Liu, Y.}, \bibinfo{author}{Chen,
  T.}, \bibinfo{author}{Tong, Y.}, \bibinfo{year}{2019}.
\newblock \bibinfo{title}{Federated machine learning: Concept and
  applications}.
\newblock \bibinfo{journal}{ACM Transactions on Intelligent Systems and
  Technology} \bibinfo{volume}{10}.
\newblock \URLprefix \url{https://doi.org/10.1145/3298981},
  \DOIprefix\doi{10.1145/3298981}.
%Type = Inproceedings
\bibitem[{Zhang et~al.(2020)Zhang, Li, Xia, Wang, Yan and Liu}]{Zhang2020}
\bibinfo{author}{Zhang, C.}, \bibinfo{author}{Li, S.}, \bibinfo{author}{Xia,
  J.}, \bibinfo{author}{Wang, W.}, \bibinfo{author}{Yan, F.},
  \bibinfo{author}{Liu, Y.}, \bibinfo{year}{2020}.
\newblock \bibinfo{title}{{BatchCrypt}: Efficient homomorphic encryption for
  {Cross-Silo} federated learning}, in: \bibinfo{booktitle}{USENIX ATC 2020},
  \bibinfo{publisher}{USENIX Association}, \bibinfo{address}{CA,USA}. pp.
  \bibinfo{pages}{493--506}.

\end{thebibliography}

\end{document}